\documentclass[11pt,a4paper]{article}%
\usepackage{amsfonts}
\usepackage{amsmath}
\usepackage{amssymb}
\usepackage{graphicx}
\usepackage{hyperref}
\usepackage{geometry}
\usepackage{booktabs} 
\usepackage{array}
\usepackage{cancel}%
\providecommand{\U}[1]{\protect\rule{.1in}{.1in}}
\newtheorem{theorem}{Theorem}

\newtheorem{proposition}{Proposition}
\newtheorem{remark}{Remark}

\newenvironment{proof}[1][Proof]{\textbf{#1.} }{\  \rule{0.5em}{0.5em}}
\makeatletter
\def \@removefromreset#1#2{\let \@tempb \@elt
\def \@tempa#1{@&#1}\expandafter \let \csname @*#1*\endcsname \@tempa
\def \@elt##1{\expandafter \ifx \csname @*##1*\endcsname \@tempa \else
\noexpand \@elt{##1}\fi}     \expandafter \edef \csname cl@#2\endcsname{\csname cl@#2\endcsname}     \let \@elt \@tempb
\expandafter \let \csname @*#1*\endcsname \@undefined}

\@removefromreset{equation}{section}

\@removefromreset{theorem}{section}
\makeatother
\usepackage{listings,xcolor}
\begin{document}

\title{From Majorana to Bloch: two ways to generalize the Bloch sphere representation}
\author{Louis Hanotel\\
Tikhonov Moscow Institute of Electronics and Mathematics, HSE University, \\
Moscow 123458, Russia}

\maketitle

\begin{abstract}
While the Bloch sphere provides a canonical geometric representation for qubits, its generalization to pure qudit states ($d>2$) is not unique. This representation can be extended following two different approaches: increasing the number of points, as in the Majorana representation, or using a single point but in a space of larger dimension, as in the generalized Bloch representation. In this work, we compare these representations of pure qudit states by finding explicit analytical relations between them. In particular, a general formula of the Bloch vector in terms of the Majorana points of a given state is provided.  We also analyze some suitable extensions of these representations in the case of pure permutation-symmetric two- and three-qubit states, deriving the corresponding relation between them. These analytical results allow us to study, from a geometric point of view, quantum correlations of special classes of two-qubit states.
\end{abstract}

%\tableofcontents
%\pagebreak

\section{Introduction}
A pure state of a $d$-level quantum system is represented by a vector in a Hilbert space $\mathcal{H}_d \cong \mathbb{C}^d$. Any state, pure or mixed, can be described by a density operator acting on $\mathcal{H}_d$. Density operators are Hermitian, positive semidefinite and unit-trace. Let $\mathcal{B}(\mathcal{H}_d)$ denote the space of bounded linear operators acting on $\mathcal{H}_d$. The space of density operators of $d$-level quantum systems is a convex set defined as
\begin{equation}
    \mathcal{L}_{+,1}(\mathcal{H}_d)=
    \left\lbrace
    \rho\in\mathcal{B}(\mathcal{H}_d)\ :\rho=\rho^{\dagger},\ \mathrm{tr}\ \rho=1,\ \rho\geq 0\ \right\rbrace.
\end{equation}

For $d=2$, the most commonly used representation of quantum states in $\mathcal{L}_{+,1}(\mathcal{H}_2)$---both pure and mixed---is the Bloch representation. In it, a pure qubit state is represented by a point on the surface of the unit sphere, known in quantum information theory as the \emph{Bloch sphere}. This point can be defined operationally, for example, in the case of spin systems, as the direction in space defined by the spin expectation value in a given state. On the other hand, a mixed state is represented by a point inside the unit sphere, i.e., a point in the \emph{Bloch ball}. 

Nevertheless, even for pure qudit states of $d>2$, there is no unique extension of the Bloch  representation. There are at least two ways to generalize the Bloch sphere to represent pure qudit states in geometric terms: we can describe the state by a single point on a space of larger dimension, as done in the generalized Bloch representation \cite{Kim:03, Lou.Kul:21}, or we can allow for a description on the unit sphere in three-dimensional space while increasing the number of points --- as done in the Majorana representation \cite{Majo:32}. For mixed states, the Majorana-like representations require different sets of points and, in general, spheres of different radii \cite{Ser.Bra:20}. In this work, we analyze how those two approaches are related, motivated by their potential applications to analyze quantum correlations, as exemplified below; but also because these relations could be used to reduce the number of parameters in some optimization tasks of interest in quantum information.

The generalized Bloch representation, hereafter Bloch representation, associates every qudit state, pure or mixed, with a point in $\mathbb{R}^{d^2-1}$, referred to as its \emph{Bloch vector}, whose components can be defined operationally, making it suitable for experimental contexts \cite{Kim:03}. In contrast, the Majorana representation associates every pure qudit state with a set of $d-1$ points on the unit sphere $\mathcal{S}^2\subset\mathbb{R}^{3}$ that are usually referred to as \emph{stars}, and the whole set as a \emph{(Majorana) constellation}. The Majorana representation has proven particularly valuable for analyzing the rotational properties of quantum states as it faithfully encodes the action of rotations on quantum states. It is worth to mention that there are other alternative representations of quantum states in terms of points of a suitable (set of) sphere(s), e. g., the representation by zeroes of the Husimi function \cite{Leb.Vor:90}, the one of Ramachandran and Ravishankar \cite{Ram.Rav:86}---based on polarization tensors---and the related generalized Majorana representation for arbitrary Hermitian operators \cite{Ser.Bra:20}, then including all mixed states but requiring a multiconstellation, i. e., a set of constellations rather than a single one, in spheres of different radii. For certain classes of multipartite states with special symmetries, generalizations of the Majorana representation have also been proposed in terms of multiconstellations, as in \cite{Chr.Guz.etal:21,Chr.Han.etal:21}. Other related representations with different ranges of applicability are found in \cite{Kur:11, Kur.etal:16, Elt.etal:21, Sha.Goz.Saz:24} and references therein.

In this work, we compare these representations in the case of pure qudit states, in particular, by finding a general expression of the Bloch vector in terms of the Majorana constellation. We also derive specific results for the case of pure permutation symmetric states of two and three qubits, since both representations admit a natural extension to those classes of quantum states without introducing the multiconstellation formalism used in case of general states. The relation between representations proves to be useful for analyzing quantum correlations and to reduce the number of parameters in optimization tasks involving rotationally invariant functions.

The article is organized as follows. In Section \ref{Section2}, Bloch and Majorana representations are briefly presented and, in Section \ref{Section3}, the relation between them for pure qudit states is discussed, in particular, Theorem \ref{Theorem1} in this section provides an exact analytical expression for the Bloch vector in terms of the Majorana constellation of the state. Analogous relations are also found in the case of symmetric multiqubit states; specifically, in Section \ref{Section4}, we analyze how to express the corresponding
%generalized Bloch representation parameters?
Bloch representation parameters in terms of the Majorana points in the case of pure permutation-symmetric two- and three-qubit states, as formalized in Theorems \ref{Theorem2} and \ref{Theorem3}. As an application of these results, in Section \ref{Section5}, we analyze the quantum correlations, quantified via the geometric quantum discord, of permutation symmetric two-qubit states in terms of Majorana stars. With these results we also find  a connection between a spin anticoherence measure and geometric quantum discord for permutation symmetric pure  two-qubit states, as we formulate in Theorem \ref{Theorem4}. In Section \ref{Section6}, conclusions and final remarks are presented.

\section{Preliminaries: Bloch and Majorana representations of qudit states}
\label{Section2}
In this section we briefly recall the Bloch representation for arbitrary qudit states and the Majorana representation of pure qudit states.

\subsection{Bloch representation}

In the Bloch representation any qudit density matrix is expressed by a vector in  $\mathbb{R}^{d^2-1}$, whose components are related to the generalized Gell-Mann operators, which constitute the standard basis of the Lie algebra $\mathfrak{su}(d)$ in its defining representation, consisting of $d^2-1$ matrices of size $d\times d$.

Consider the $(d^2-1)$-dimensional tuple $\Lambda_d=\lbrace \Lambda_d^{(i)}\rbrace_{i=1}^{d^2-1}$ of the aforementioned $\mathrm{SU}(d)$ generators, ordered as follows: first the symmetric generators, then antisymmetric ones and finally the diagonal ones. The tuple $\Lambda_d$ can then be expressed as
\begin{equation}
\label{gen_sasd}
\Lambda_d=\left(\Lambda_{12}^{(\mathrm{s})}, \ldots, \Lambda_{1 d}^{(\mathrm{s})}, \ldots, \Lambda_{d-1\phantom{,} d}^{(\mathrm{s})}, \Lambda_{12}^{(\mathrm{a s})}, \ldots, \Lambda_{1 d}^{(\mathrm{a s})}, \ldots, \Lambda_{d-1\phantom{,} d}^{(\mathrm{a s})}, \Lambda_1^{(\mathrm{diag})}, \ldots, \Lambda_{d-1}^{(\mathrm{diag})}\right),
\end{equation}
 where the generators in the computational basis take the form \cite{Lou:20}:
\begin{align}
\label{lambda_sym}
\Lambda_{m k}^{(\mathrm{s})} & =|m\rangle\langle k|+|k\rangle\langle m|, \quad \Lambda_{m k}^{(\mathrm{s})}=\Lambda_{k m}^{(\mathrm{s})}, \quad 1 \leq m<k \leq d, \\
\label{lambda_asym}
\Lambda_{m k}^{(\mathrm{as})} & =-i|m\rangle\langle k|+i|k\rangle\langle m|, \quad \Lambda_{m k}^{(\mathrm{as})}=-\Lambda_{k m}^{(\mathrm{as})}, \quad 1 \leq m<k \leq d \\
\label{lambda_diag}
\Lambda_l^{(\mathrm{diag})} & =\sqrt{\frac{2}{l(l+1)}}\left(\sum_{j=1}^l|j\rangle\langle j|-l|l+1\rangle\langle l+1|\right), \quad 1 \leq l \leq d-1\ ,
\end{align}
and satisfy, for any values of $k,m=1,2, \cdots,d^2-1$, the relation

\begin{equation}
\label{prodSUd}
    \Lambda^{(k)}_d \Lambda^{(m)}_d
    =
    \frac{2}{d}\delta_{km}\mathbb{I}_{\mathbb{C}^d}
    + 
    \sum_{l=1}^{d^2-1}
    \left(
    g_{kml}+if_{kml}
    \right)
    \Lambda^{(l)}_d \ ,
\end{equation}
where $g_{kml}:=\frac{1}{4}\mathrm{tr}\left(\lbrace \Lambda^{(k)}_d, \Lambda^{(m)}_d\rbrace\Lambda^{(l)}_d\right)$ denote the symmetric coefficients and $f_{kml}:=\frac{1}{4i}\mathrm{tr}\left([ \Lambda^{(k)}_d, \Lambda^{(m)}_d]\Lambda^{(l)}_d\right)$ the (antisymmetric) structure constants of the Lie algebra $\mathfrak{su}(d)$, $\delta_{km}$ denotes the Kronecker delta,  $\mathbb{I}_{\mathbb{C}^d}$ the identity matrix in $\mathbb{C}^d$, and $\lbrace \cdot,\cdot\rbrace$ and $[\cdot,\cdot]$ denote the anticommutator and commutator of two operators, respectively.

The $d^2-1$ operators in \eqref{lambda_sym}-\eqref{lambda_diag} constitute a basis for the space of traceless Hermitian operators acting on $\mathbb{C}^d$, generalizing the Pauli and Gell-Mann operators for the cases $d=2,3$, respectively. Therefore, any qudit state $\rho\in \mathcal{L}_{+,1}(\mathbb{C}^d)$ can be written in the form
\begin{equation}
\label{rho_gmrep}
    \rho=\frac{1}{d}\mathbb{I}_{\mathbb{C}^d}+\sqrt{\frac{d-1}{2d}}r_\rho\cdot \Lambda_d\ ,
\end{equation}
where the components $r_{\rho}^{(i)}$ of the vector $r_\rho\in\mathbb{R}^{d^2-1}$ are defined as
\begin{equation}
\label{bloch_components}
    r_\rho^{(i)}=\sqrt{\frac{d}{2(d-1)}}\mathrm{tr}(\Lambda^{(i)}_d \rho),\ \  i=1,2,\dots,d^2-1\ .
\end{equation}

The vector $r_{\rho}\in\mathbb{R}^{d^2-1}$, known as \emph{Bloch vector}, in the chosen basis, can be also expressed as
\begin{equation}
\label{r_stb1}
r_{\rho}=
\left(r_{12}^{(\mathrm{s})}, \ldots, r_{1 d}^{(\mathrm{s})}, \ldots, r_{d-1\phantom{,} d}^{(\mathrm{s})},r_{12}^{(\mathrm{as})}, \ldots, r_{1 d}^{(\mathrm{as})}, \ldots, r_{d-1\phantom{,} d}^{(\mathrm{as})}, r_1^{(\mathrm{diag})}, \ldots, r_{d-1}^{(\mathrm{diag})}\right),
\end{equation}
where
\begin{equation}
\label{r_stb2}
r_{mk}^{(\mathrm{s})}=\sqrt{\frac{d}{2(d-1)}}\mathrm{tr}(\rho \Lambda_{mk}^{(\mathrm{s})}), \ r_{mk}^{(\mathrm{as})}=\sqrt{\frac{d}{2(d-1)}}\mathrm{tr}(\rho \Lambda_{mk}^{(\mathrm{as})}),\  r_{l}^{(\mathrm{diag})}=\sqrt{\frac{d}{2(d-1)}}\mathrm{tr}(\rho \Lambda_{l}^{(\mathrm{diag})}).
\end{equation}
%\begin{equation}
%\label{r_stb2}
%r_{mk}^{(\mathrm{s})}=\sqrt{\frac{d}{2(d-1)}}\mathrm{tr}(\rho \Lambda_{mk}^{(\mathrm{s})}),  r_{mk}^{(\mathrm{as})}=\sqrt{\frac{d}{2(d-1)}}\mathrm{tr}(\rho \Lambda_{mk}^{(\mathrm{as})}), r_{l}^{(\mathrm{diag})}=\sqrt{\frac{d}{2(d-1)}}\mathrm{tr}(\rho \Lambda_{l}^{(\mathrm{diag})}).
%\end{equation}
So, in this notation, we have $r_{\rho}^{(1)}=r_{12}^{(\mathrm{s})},\dots,r_{\rho}^{(d^2-1)}=r_{d-1}^{(\mathrm{diag})}$.

In this representation, the maximally mixed state has a vanishing Bloch vector; the Euclidean distance in $\mathbb{R}^{d^2-1}$ between the points representing two states is also equal (or proportional, depending on the chosen normalization) to the Hilbert-Schmidt distance between the corresponding density operators in $\mathcal{B}(\mathcal{H})$ \cite{Ben.Zyc:2017};  the purity condition  $\rho^2=\rho$ is equivalent, by \eqref{prodSUd}, to the following conditions:
\begin{proposition}
\label{Proposition1}
    A qudit density matrix $\rho \in \mathcal{L}_{+,1} (\mathbb{C}^{d})$  represents a pure state if and only if its Bloch vector $r_\rho \in \mathbb{R}^{d^2-1}    $ satisfies the following conditions \cite{Ben.Zyc:2017, Arv.Mal.Muk:97}:
\begin{itemize}
    \item[(i)] $\|r_\rho\|_{\mathbb{R}^{d^2-1}}=1,$
    \item[(ii)] $ \sum_{jk} g_{ijk} r_\rho^{(j)} r_\rho^{(k)}=(d-2)\sqrt{\frac{2}{d(d-1)}}r_\rho^{(i)}$, where $g_{ijk}$ are the symmetric constants defined above.
\end{itemize}
\end{proposition}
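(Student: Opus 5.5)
We compute $\rho^2$ directly from the expansion \eqref{rho_gmrep} and the product rule \eqref{prodSUd}, and compare it with $\rho$ coefficient by coefficient in the basis $\{\mathbb{I}_{\mathbb{C}^d},\Lambda_d^{(1)},\dots,\Lambda_d^{(d^2-1)}\}$. Since a Hermitian unit-trace $\rho$ is a pure state if and only if $\rho^2=\rho$ (a projector of trace one has rank one, and conversely), this comparison yields exactly conditions (i) and (ii).

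Write $c=\sqrt{\frac{d-1}{2d}}$. Squaring gives
\[
\rho^2=\frac{1}{d^2}\mathbb{I}_{\mathbb{C}^d}+\frac{2c}{d}\,r_\rho\cdot\Lambda_d+c^2\sum_{j,k}r_\rho^{(j)}r_\rho^{(k)}\Lambda_d^{(j)}\Lambda_d^{(k)} .
\]
Insert \eqref{prodSUd} into the last sum. The real vector $r_\rho$ is contracted symmetrically in $j,k$, so the antisymmetric terms $f_{jkl}$ cancel. The $\delta_{jk}$ term contributes $\frac{2c^2}{d}\|r_\rho\|^2\,\mathbb{I}_{\mathbb{C}^d}$, and the $g$ term contributes $c^2\sum_{l}\bigl(\sum_{jk}g_{jkl}r_\rho^{(j)}r_\rho^{(k)}\bigr)\Lambda_d^{(l)}$. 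We then use the linear independence of the identity and the generators: they are mutually orthogonal under the Hilbert–Schmidt inner product, since the generators are traceless and $\mathrm{tr}(\Lambda^{(k)}_d\Lambda^{(m)}_d)=2\delta_{km}$, which follows by taking the trace of \eqref{prodSUd}.

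Matching the identity components of $\rho^2=\rho$ gives
\[
\frac{1}{d^2}+\frac{2c^2}{d}\|r_\rho\|^2=\frac{1}{d}.
\]
With $2c^2/d=(d-1)/d^2$, this reduces to $\|r_\rho\|^2=1$, which is (i). Equivalently, this is the condition $\mathrm{tr}\rho^2=1$.

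Matching the $\Lambda_d^{(i)}$ components gives
\[
\frac{2c}{d}r_\rho^{(i)}+c^2\sum_{jk}g_{jki}r_\rho^{(j)}r_\rho^{(k)}=c\,r_\rho^{(i)} .
\]
Hence $\sum_{jk}g_{ijk}r_\rho^{(j)}r_\rho^{(k)}=\frac{1}{c}\frac{d-2}{d}r_\rho^{(i)}$, using the full symmetry of $g$, which follows from cyclicity of the trace in its definition. Since $\frac{1}{c}\frac{d-2}{d}=(d-2)\sqrt{\frac{2}{d(d-1)}}$, this is exactly (ii).

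The converse direction comes for free, because the argument runs through equivalences. Conditions (i) and (ii) give equality of all coefficients, hence $\rho^2=\rho$, and therefore $\rho$ is a rank-one projector, i.e. a pure state. No step is a real obstacle. The points needing care are the cancellation of the $f$-terms under symmetric contraction, the symmetry of $g_{ijk}$ that lets us match the index placement of the statement, and the arithmetic of the normalization constants.
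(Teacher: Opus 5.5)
Your proof is correct and follows the same route as the paper, which justifies Proposition~\ref{Proposition1} by noting that $\rho^2=\rho$ is equivalent, via the product rule \eqref{prodSUd}, to conditions (i) and (ii). Your coefficient matching (the identity component gives $\|r_\rho\|^2=1$, and the $\Lambda_d^{(i)}$ components give (ii) once the $f$-terms cancel and $g_{ijk}$ is symmetrized) carries out exactly that computation in full, and your normalization arithmetic checks out.
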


The above choice of orthogonal basis for the $\mathfrak{su}(d)$ generators is arbitrary in the sense that we could choose another orthogonal basis with respect to the Hilbert-Schmidt inner product in $\mathcal{B}(\mathcal{H})$, defined by $\langle A,B\rangle=\frac{1}{2}\mathrm{tr}(A^{\dagger}B)$. The general Bloch vector formalism ---independent of the particular choice of an orthogonal basis of $\mathfrak{su}(d)$ generators--- is known as the \emph{generalized Bloch representation} and is thoroughly analyzed in \cite{Lou.Kul:21}; some particular representations derived from particular choices of bases are studied in \cite{Ber.Kra:08}. However, in this article we restrict ourselves to the standard Bloch representation  \eqref{rho_gmrep} defined using the operators \eqref{lambda_sym}-\eqref{lambda_diag}.

The Bloch representation can be extended to arbitrary states of any dimension and number of parties, but the geometry of the set of possible Bloch vectors is much more complex than that of the Bloch ball. In \cite{Goy.etal:16} and \cite{Jak.Sie:01}, the geometries of the simplest extensions in dimension and number of parties, respectively, are analyzed. Let us exemplify what this representation looks like in the case of two-qudit states: a general two-qudit density matrix  $\rho \in\mathcal{L}_{+,1} (\mathbb{C}^{d} \otimes \mathbb{C}^d)$ can be written \cite{Lou.Kul:21} in the form 
\begin{equation}
 \label{2qudits}
    \rho=
    \frac{\mathbb{I}_{\mathbb{C}^d} \otimes \mathbb{I}_{\mathbb{C}^d}}{d^2}
    +
    \sqrt{\frac{d-1}{2d^3}}
    (r_{\rho} \cdot \Lambda_d)\otimes \mathbb{I}_{\mathbb{C}^d}
    +
    \sqrt{\frac{d-1}{2d^3}}
    \mathbb{I}_{\mathbb{C}^d}\otimes (s_{\rho} \cdot \Lambda_d)
    +
    \frac{1}{4}
    \sum_{i,j=1}^{d^2-1}T_{\rho}^{(ij)}\Lambda^{(i)}_d\otimes\Lambda^{(j)}_d\ ,
\end{equation}

\noindent where  $r_{\rho}^{(i)}:=\sqrt{\frac{d}{2(d-1)}}\mathrm{tr}\left(\rho \ \Lambda^{(i)}_d\otimes\mathbb{I}_{\mathbb{C}^d}\right)$ and $s_{\rho}^{(i)}:=\sqrt{\frac{d}{2(d-1)}}\mathrm{tr}\left(\rho\  \mathbb{I}_{\mathbb{C}^d} \otimes \Lambda^{(i)}_d\right)$ are the \emph{Bloch vectors} of the reduced state density matrices tracing over the corresponding spaces, and $T^{(ij)}_{ \rho}:=\mathrm{tr}\left(\rho \ \Lambda^{(i)}_d\otimes \Lambda^{(j)}_d\right)$ is known as \emph{correlation matrix}.

\subsection{Majorana representation}
The Majorana representation \cite{Majo:32} allows for a representation of a pure qudit state by a set of $d-1$ points on the unit sphere $\mathcal{S}^2\subset\mathbb{R}^3$, generalizing the Bloch sphere representation of qubits to which it reduces in the case of $d=2$. 

The set of points associated with a pure qudit state written in the standard basis, $|\psi\rangle=\sum_{m=1}^{d}\psi^{(m)} |m\rangle\in \mathbb{C}^{d}$, in the Majorana representation is found by computing the roots of the so-called \emph{Majorana polynomial} of the state, on the auxiliary complex variable $\zeta$ of degree $d-1$,
%%%%%% urge revisar la ecuación 12
\begin{equation}
\label{majo}
    P_{|\psi\rangle}(\zeta)=\sum_{m=1}^{d} (-1)^{m} \sqrt{\binom{d-1}{m-1}}\psi^{(m)} \zeta^{d-m}\ ,
\end{equation}
and then finding its $d-1$ roots (with possible degeneration) in the complex plane and projecting them stereographically
%\footnote{The (inverse) stereographic projection from the south pole maps a point $x+iy$ in the complex plane to the point with coordinates $(2x,2y,1-x^2-y^2)/(1+x^2+y^2)$ on the unit sphere.} 
from the south pole to the unit sphere. With this procedure, we can associate with every qudit state, up to a global phase, a set of $d-1$ points on the sphere. Each of those points is commonly referred to as a \emph{star}, and the set of them, a \emph{(Majorana) constellation}. 

Conversely, given a constellation, it is possible to find the corresponding state by reversing the procedure, since the representation is unique modulo a global phase: first, by means of the stereographic projection, map the constellation to a set of points $\zeta_i\in\mathbb{C}$ in the complex plane and then find the coefficient of the state by comparing term by term the Majorana polynomial with the polynomial $\prod_i(\zeta-\zeta_i)$ and solving the corresponding linear equations.

The Majorana representation can be naturally extended to describe permutation-symmetric multiqubit states by exploiting their correspondence with qudit states. All basis states in the standard computational basis of a qudit can be put in a one-to-one correspondence with the (permutation-symmetric) Dicke states \cite{Dev.Sud.Rah:12}. A pure qudit state with a given constellation with stars at $\hat{n}_1,\hat{n}_2,\dots,\hat{n}_{d-1}\in\mathcal{S}^2\subset\mathbb{R}^3$ can be thought as a pure permutation-symmetric state given by the symmetrized tensor product of $d-1$ single qubit states $|\psi_{\hat{n}_i}\rangle$, each represented by a single star at $\hat{n}_i$, for $i=1,2,\dots, d-1$. In the context of spin states, this correspondence is exactly the isomorphism between the Hilbert space of a spin-$s$ particle and that of $2s$ spin-$1/2$ particles. It was precisely in this context that Majorana originally introduced \cite{Majo:32} this geometric representation of quantum states.

Both representations have specific properties that make them useful for different tasks. In the case of the Majorana representation, its main property, besides providing a visualization of quantum states in three-dimensional space, despite the fact that they live in higher-dimensional Hilbert spaces, is that it faithfully represents the action of the rotation group on the projective\footnote{It is a projective representation since any two states differing by a global phase are represented by the same Majorana constellation. So the relation between constellations is one-to-one with points in projective space and not with states in Hilbert space.} Hilbert space of pure states: if a state $|\psi\rangle\in \mathbb{C}^{d}$ is represented by a constellation $C$, then the rotated state $D(R)|\psi\rangle$ is represented by the constellation $R(C)$, where $R$ is a rigid rotation on three-dimensional space and $D(R)$ denotes its representation by a $\mathrm{SU}(2)$-matrix (in the corresponding spin-$s$ representation, with $d=2s+1$) acting on  qudit states. This characteristic of Majorana representation is relevant in the study of rotationally invariant properties of quantum states, such as geometric entanglement \cite{Aul:10,Mar:11} or spin-anticoherence \cite{Zim:06}. Recent surveys of applications of the Majorana representation can be found in \cite{LL2026, Kam:26}. In contrast to the Majorana representation, in the Bloch representation, the action of rotations for $d>2$ changes the Bloch vectors in a complicated way. Nevertheless, other properties, like some quantum correlation measures, are readily expressed in these terms ---provided the representation is properly extended to composite systems---see, e.g., \cite{Dev:08, Lou.Kul:21} and references therein. The choice of representation depends on the particular task in which we are interested, so in order to take advantage of both of them it is necessary to analyze the relation between them.

\section{Relations between representations}
\label{Section3}

A direct transformation between these representations would let us exploit the advantages of both. In this section, we express the Bloch vector of any pure qudit state in terms of its Majorana constellation, and comment on the difficulty of the inverse problem.

\subsection{From Majorana to Bloch representation}

In order to find the Bloch vector in terms of the Majorana stars, first we express the density matrix of a pure qudit state in terms of its Majorana constellation.
\begin{proposition}
\label{proposition_densityM}
The general expression of the density matrix $\rho=|\psi\rangle\langle \psi|$ of a pure qudit state $|\psi\rangle\in\mathbb{C}^d$ whose Majorana constellation is given by $d-1$ points $\hat{n}_r\in \mathcal{S}^2\subset\mathbb{R}^3$ with spherical coordinates $(\theta_r,\phi_r),\ \theta_r\in[0,\pi],\ \phi_r\in[0,2\pi)\ ,\ r=1,\dots,d-1$, has the form 
\begin{equation}
\label{qudit_state}
    \rho=\frac{1}{N^2}\sum_{m,k=1}^{d}\frac{(-1)^{k+m}}{(d-1)!}\sqrt{(d-m)!(d-k)!(m-1)!(k-1)!}W_{d-m}^{(d-1)}(W_{d-k}^{(d-1)})^*|m\rangle\langle k|,
\end{equation}
%\begin{equation}
%\label{qudit_state}
%    \rho=\frac{1}{N^2}\sum_{m,k=1}^{d}\frac{(-1)^{k+m}}{(d-1)!}\sqrt{(d-m)!(d-k)!(m-1)!(k-1)!}W_{d-m}^{(d-1)}(W_{d-k}^{(d-1)})^*|m\rangle\langle k|,
%\end{equation}

\noindent where $*$ denotes complex conjugation and for $0\leq n\leq d-1$, $W_n^{(t)}$ denotes the $n$-th fundamental symmetric function in the $t=d-1$ variables $\left\lbrace \zeta_{\hat{n}_r}\right\rbrace_{r=1}^{t}$,  namely
\begin{equation}
\label{wn}
    W_n^{(t)}=(-1)^{t-n}\sum_{q_1+\cdots+q_{d-1}=t-n} \zeta_{\hat{n}_1}^{q_1}\cdots \zeta_{\hat{n}_{t}}^{q_{t}},\ q_i\in\left\lbrace 0,1 \right\rbrace,\ \forall i=1,2\dots,t,
\end{equation}
and $\zeta_{\hat{n}_r}:=\tan \frac{\theta_r}{2}e^{i\phi_r},\ \forall r=1,2\dots,t,$ are the stereographic projections, from the south pole, of the Majorana stars to the complex plane. 
The normalization factor $N$, which also depends on the Majorana constellation, is
\begin{equation}
\label{normalization}
    N^2:=\sum_{r=1}^d\binom{d-1}{r-1}^{-1}|W_{d-r}^{(d-1)}|^2\ .
\end{equation}
\end{proposition}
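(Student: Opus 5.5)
The plan is to invert the Majorana polynomial \eqref{majo} explicitly. I will first recover the state coefficients $\psi^{(m)}$, up to a global factor, from the roots of the polynomial. Then I will form the outer product $|\psi\rangle\langle\psi|$ and fix the remaining factor by normalization.

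Let $t=d-1$. Since the stars $\hat n_r$ project to the roots $\zeta_{\hat n_r}$, the polynomial factors as
$$P_{|\psi\rangle}(\zeta)=c\prod_{r=1}^{t}(\zeta-\zeta_{\hat n_r})$$
for some constant $c\in\mathbb{C}$. Expanding the product by Vieta's formulas, the coefficient of $\zeta^{n}$ is $(-1)^{t-n}e_{t-n}(\zeta_{\hat n_1},\dots,\zeta_{\hat n_t})$, where $e_j$ is the $j$-th elementary symmetric polynomial. By \eqref{wn} this is exactly $W_n^{(t)}$: the sum over $q_i\in\{0,1\}$ with $\sum q_i=t-n$ is $e_{t-n}$. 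Comparing with \eqref{majo} at the power $\zeta^{d-m}$ gives
$$(-1)^m\sqrt{\binom{d-1}{m-1}}\,\psi^{(m)}=c\,W_{d-m}^{(d-1)}, \qquad\text{so}\qquad \psi^{(m)}=c\,(-1)^m\binom{d-1}{m-1}^{-1/2}W_{d-m}^{(d-1)}.$$

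Next I compute $\rho_{mk}=\psi^{(m)}(\psi^{(k)})^*$. The factor $|c|^2$ appears, together with the sign $(-1)^{m+k}$ and the product $\binom{d-1}{m-1}^{-1/2}\binom{d-1}{k-1}^{-1/2}$. Writing each binomial as $(d-1)!/\big((m-1)!(d-m)!\big)$ turns this product into
$$\frac{\sqrt{(d-m)!(m-1)!(d-k)!(k-1)!}}{(d-1)!},$$
which is exactly the factor in \eqref{qudit_state}. The global phase of $c$ cancels in $\rho$. The modulus $|c|$ is fixed by $\mathrm{tr}\rho=\sum_m|\psi^{(m)}|^2=1$, which gives $|c|^{-2}=\sum_{m}\binom{d-1}{m-1}^{-1}|W_{d-m}^{(d-1)}|^2$. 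This is $N^2$ as defined in \eqref{normalization}.

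The main subtlety is the degenerate case in which some stars sit at the south pole, where $\zeta_{\hat n_r}=\infty$. Then the polynomial has degree less than $t$: the leading coefficients vanish, $\psi^{(1)}=0$, and so on. I would handle this by a limiting argument. Divide the state by the largest power of the divergent $\tan(\theta_r/2)$ and let $\theta_r\to\pi$; then $W_n^{(t)}/\prod_{r\,:\,\zeta_{\hat n_r}\to\infty}\zeta_{\hat n_r}$ converges to the appropriate symmetric function of the finite roots. Because $\rho$ is homogeneous of degree zero in the $W$'s (the same factor appears in $N^2$), the formula extends by continuity. Away from this case the argument is a routine coefficient comparison.
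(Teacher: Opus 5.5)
Your proposal is correct and takes essentially the same approach as the paper: factor the Majorana polynomial as $c\prod_{r}(\zeta-\zeta_{\hat n_r})$, identify the coefficients with the $W_n^{(d-1)}$ via Vieta's formulas, solve for $\psi^{(m)}$ by coefficient comparison, and fix $|c|=1/N$ by normalization before forming $|\psi\rangle\langle\psi|$. Your continuity argument for stars at the south pole ($\zeta_{\hat n_r}=\infty$) goes beyond the paper, whose proof silently skips this degenerate case; it works because $\rho$ is unchanged by a common rescaling of all the $W$'s.
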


\begin{proof}
Consider the stereographic projections $\left\lbrace \zeta_{\hat{n}_r}\right\rbrace_{r=1}^{d-1}$ of the Majorana stars $\left\lbrace \hat{n}_r\right\rbrace_{r=1}^{d-1}$, given by $\zeta_{\hat{n}_r}:=\tan \frac{\theta_r}{2}e^{i\phi_r},\ \forall r=1\dots,d-1$. The Majorana polynomial of the state $|\psi\rangle$ has $\zeta_{\hat{n}_r}$ as its roots so it has to be proportional to the polynomial $\prod_{r=1}^{d-1}(\zeta-\zeta_{\hat{n}_r})$, which by Vieta's formula can be written as $\sum_{r=0}^{d-1} W_r^{(d-1)}\zeta^{r}$, where $W_r^{(d-1)}$ are given as in \eqref{wn}. Comparing this polynomial, term by term in $\zeta$, with the Majorana polynomial of $|\psi\rangle$ as given in \eqref{majo}, we found that the components $\left\lbrace \psi^{(r)}\right\rbrace_{r=1}^{d}$ of the state vector $|\psi\rangle\in\mathbb{C}^d$ are
\begin{equation}
\label{comp_Psi_r}
    \psi^{(r)}=\frac{1}{N} \frac{(-1)^rW_{d-r}^{(d-1)}}{\sqrt{\binom{d-1}{r-1}}}=\frac{1}{N}(-1)^r\sqrt{\frac{(d-r)!(r-1)!}{(d-1)!}} W_{d-r}^{(d-1)}\ ,
\end{equation}
where  the normalization factor is
\begin{equation}
\label{normalizationState}
    N=\sqrt{\sum_{r=1}^d\frac{(d-r)!(r-1)!}{(d-1)!}|W_{d-r}^{(d-1)}|^2}\ .
\end{equation}
By \eqref{comp_Psi_r} and \eqref{normalizationState} in the definition of the density matrix $\rho=|\psi\rangle\langle\psi|$, Eqs. \eqref{qudit_state} and \eqref{normalization} follow.
\end{proof}

With this result we can derive a general expression for the Bloch vectors in terms of the Majorana constellation.
 
 \begin{theorem}
 \label{Theorem1}
     The Bloch vector \eqref{bloch_components} of an arbitrary pure qudit state $\rho\in\mathcal{L}_{+,1}(\mathbb{C}^d)$ as in \eqref{rho_gmrep} in terms of the spherical coordinates $(\theta_k,\phi_k) ,\ \theta_k\in[0,\pi],\ \phi_k\in[0,2\pi),\ k=1,\dots,d-1$, of its Majorana stars $\hat{n}_k=(\sin\theta_k\cos \phi_k,\sin\theta_k\sin \phi_k,\cos \theta_k)\in\mathcal{S}^2\subset\mathbb{R}^3$ is, following the notation established in Eqs. \eqref{r_stb1} and \eqref{r_stb2}, given by

 \begin{equation}
 \label{generalr}
     r_{\rho}=
\left(r_{12}^{(\mathrm{s})}, \ldots, r_{1 d}^{(\mathrm{s})}, \ldots, r_{d-1, d}^{(\mathrm{s})}, r_{12}^{(\mathrm{as})}, \ldots, r_{1 d}^{(\mathrm{as})}, \ldots, r_{d-1, d}^{(\mathrm{as})}, r_1^{(\mathrm{diag})}, \ldots, r_{d-1}^{(\mathrm{diag})}\right),
\end{equation}
where 
\small{
\begin{equation}
\label{parte_sim}
    r_{mk}^{(\mathrm{s})}=\frac{1}{N^2}\sqrt{\frac{2d}{d-1}}\frac{(-1)^{m+k}}{(d-1)!}\sqrt{(d-m)!(d-k)!(m-1)!(k-1)!}\mathrm{Re}(W_{d-m}^{(d-1)}(W_{d-k}^{(d-1)})^{*}),
\end{equation}
\begin{equation}
    r_{mk}^{(\mathrm{as})}=-\frac{1}{N^2}\sqrt{\frac{2d}{d-1}}\frac{(-1)^{m+k}}{(d-1)!}\sqrt{(d-m)!(d-k)!(m-1)!(k-1)!}\mathrm{Im}(W_{d-m}^{(d-1)}(W_{d-k}^{(d-1)})^{*}),
\end{equation}
and
\begin{equation}
\label{diagonalr}
    r_{l}^{(\mathrm{diag})}=\frac{1}{N^2}\frac{1}{(d-1)!}\sqrt{\frac{d}{(d-1)l(l+1)}}
    \left(
    -l(d-l-1)!\phantom{,}l!\phantom{,}|W_{d-l-1}^{(d-1)}|^2+\sum_{j=1}^l(d-j)!(j-1)!|W_{d-j}^{(d-1)}|^2
    \right).
\end{equation}}
\end{theorem}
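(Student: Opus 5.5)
The plan is to substitute the density matrix of Proposition \ref{proposition_densityM} directly into the definition \eqref{bloch_components} of the Bloch components, in the form \eqref{r_stb2}, and evaluate the three families of traces separately. Write $\rho=\sum_{m,k}\rho_{mk}|m\rangle\langle k|$ with
\begin{equation*}
\rho_{mk}=\frac{1}{N^2}\frac{(-1)^{m+k}}{(d-1)!}\sqrt{(d-m)!(d-k)!(m-1)!(k-1)!}\,W_{d-m}^{(d-1)}(W_{d-k}^{(d-1)})^*,
\end{equation*}
and note that $\rho_{km}=\rho_{mk}^*$. Only the matrix elements are needed; the Majorana content enters solely through the $W$'s.

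For the off-diagonal generators, I will use $\mathrm{tr}(\rho|m\rangle\langle k|)=\rho_{km}$. For $m<k$ this gives
\begin{equation*}
\mathrm{tr}(\rho\Lambda^{(\mathrm{s})}_{mk})=\rho_{km}+\rho_{mk}=2\,\mathrm{Re}\,\rho_{mk}.
\end{equation*}
Similarly,
\begin{equation*}
\mathrm{tr}(\rho\Lambda^{(\mathrm{as})}_{mk})=-i\rho_{km}+i\rho_{mk}=i(\rho_{mk}-\rho_{mk}^*)=-2\,\mathrm{Im}\,\rho_{mk}.
\end{equation*}
The prefactor of $\rho_{mk}$ is real, so the real and imaginary parts fall on $W_{d-m}^{(d-1)}(W_{d-k}^{(d-1)})^*$. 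Multiplying by $\sqrt{d/(2(d-1))}$ and using $2\sqrt{d/(2(d-1))}=\sqrt{2d/(d-1)}$ yields exactly the expression \eqref{parte_sim} and its antisymmetric counterpart, including the minus sign.

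For the diagonal generators, I will use \eqref{lambda_diag}:
\begin{equation*}
\mathrm{tr}(\rho\Lambda^{(\mathrm{diag})}_l)=\sqrt{\frac{2}{l(l+1)}}\left(\sum_{j=1}^l\rho_{jj}-l\rho_{l+1,l+1}\right).
\end{equation*}
The diagonal entries are $\rho_{jj}=\frac{(d-j)!(j-1)!}{N^2(d-1)!}|W_{d-j}^{(d-1)}|^2$, since the sign $(-1)^{2j}$ is trivial. At $j=l+1$ this gives $(d-l-1)!\,l!\,|W_{d-l-1}^{(d-1)}|^2$. The overall constant is $\sqrt{d/(2(d-1))}\sqrt{2/(l(l+1))}=\sqrt{d/((d-1)l(l+1))}$, which reproduces \eqref{diagonalr}.

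The computation is routine bookkeeping. The only step needing care is the sign and conjugation convention in the antisymmetric components, namely which of $\rho_{mk}$ and $\rho_{km}$ appears in $\mathrm{tr}(\rho|m\rangle\langle k|)$, since that fixes the sign of the $\mathrm{Im}$ term. I will verify it against $d=2$, where $r^{(\mathrm{as})}_{12}$ must coincide with the $y$-component of the ordinary Bloch vector computed from the single star. Finally, one can remark that the ordering in \eqref{generalr} is just that of \eqref{r_stb1}, and that the normalization $N^2$ from \eqref{normalization} carries through unchanged.
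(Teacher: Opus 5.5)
Your proposal is correct and follows the same route as the paper: substitute the density matrix of Proposition \ref{proposition_densityM} into \eqref{r_stb2}, using the generators \eqref{lambda_sym}--\eqref{lambda_diag}. You spell out the steps the paper leaves implicit, namely $\mathrm{tr}(\rho\Lambda^{(\mathrm{s})}_{mk})=2\,\mathrm{Re}\,\rho_{mk}$, $\mathrm{tr}(\rho\Lambda^{(\mathrm{as})}_{mk})=-2\,\mathrm{Im}\,\rho_{mk}$ and the diagonal sum, and your signs and constants all check out.
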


\begin{proof}
     By Proposition \ref{proposition_densityM} and Eqs. \eqref{lambda_sym}-\eqref{lambda_diag} applied in the general formula for the Bloch vector \eqref{bloch_components}, with components \eqref{r_stb2}, the results in \eqref{parte_sim}-\eqref{diagonalr} follow.
\end{proof}
\subsection{Majorana constellation from the Bloch vector}

Due to the Abel-Ruffini theorem, we do not expect a general explicit formula for the Majorana stars in terms of the Bloch vector of a given density matrix for $d>5$. This is because to obtain them we need to find the roots of the polynomial given in \eqref{majo} with $\psi_m$ obtained by taking any nonzero column of $\rho$ (appropriately normalizing it). 

For small dimensions analytical expressions can be derived, but even in these cases they turn out to be rather cumbersome. Let us exemplify this in the case of $d=3$. We will now consider a qutrit state $\rho\in\mathcal{L}_{+,1}(\mathbb{C}^3)$ given in the form \eqref{rho_gmrep}, described by a vector $r_{\rho}\in\mathbb{R}^8$, and we will find its Majorana points in terms of $r_{\rho}$.

We will first find a quantum state $|\psi\rangle=(\psi^{(1)},\psi^{(2)},\psi^{(3)})^T$ such that $\rho=|\psi\rangle \langle \psi|$; to choose a concrete solution we will fix that $\psi^{(3)}$ is real and positive (this is always possible by multiplying the state by a global phase factor). Comparing a qutrit density matrix in representation \eqref{rho_gmrep} with a general qutrit density matrix for a pure state $\rho$ with entries $\rho^{(ij)}=\psi^{(i)}\left(\psi^{(j)}\right)^*$, we obtain that
\begin{align}
    |\psi^{(1)}|^2&= \frac{1}{3}+\frac{1}{\sqrt{3}}r^{(7)}+\frac{1}{3}r^{(8)}\ , \\
     \psi^{(1)} \left(\psi^{(2)}\right)^*&= (r^{(1)}-ir^{(4)})/\sqrt{3}\ , \\
      \psi^{(1)} \left(\psi^{(3)}\right)^*&= (r^{(2)}-ir^{(5)})/\sqrt{3} \ , \\
       |\psi^{(2)}|^2&= \frac{1}{3}-\frac{1}{\sqrt{3}}r^{(7)}+\frac{1}{3}r^{(8)}\ , \\
       \psi^{(2)} \left(\psi^{(3)}\right)^*&=  (r^{(3)}-ir^{(6)})/\sqrt{3}\ , \\
         |\psi^{(3)}|^2&=\frac{1}{3}-\frac{2}{3}r^{(8)} \ . 
\end{align}

From these equations we can easily find the components of the pure state in the computational basis, for example, by taking the third column of the density matrix divided by $\left( \psi^{(3)} \right)^*$ as the following unnormalized state (if $r^{(8)}\neq 1/2$); since there is a global phase freedom we can define that the third component of the state vector in $\mathbb{C}^3$ is real and positive, so $\psi^{(3)}=\sqrt{(1-2r^{(8)})/3}$ and the resulting state is:
\begin{equation}
\label{comp}
    \psi^{(1)}=\frac{r^{(2)}-ir^{(5)}}{\sqrt{1-2r^{(8)}}}, \ \ \ \psi^{(2)}=\frac{r^{(3)}-ir^{(6)}}{\sqrt{1-2r^{(8)}}}, \ \ \ \psi^{(3)}=\sqrt{\frac{1-2r^{(8)}}{3}}\ .
\end{equation}

For the state $|\psi\rangle$ with the above components \eqref{comp}, the Majorana polynomial \eqref{majo} takes the form
\begin{align}
\label{roots_r}
    P_{|\psi\rangle}(\zeta)&=-\psi^{(3)}+\sqrt{2}\psi^{(2)}\zeta-\psi^{(1)}\zeta^2 \nonumber\\
    &=\frac{1}{\sqrt{1-2r^{(8)}}}
    \left(-(r^{(2)}-ir^{(5)})\zeta^2+\sqrt{2}(r^{(3)}-ir^{(6)})\zeta-\frac{1-2r^{(8)}}{\sqrt{3}}
    \right),
\end{align}
with roots, for $r^{(2)}$ and $r^{(5)}$ non simultaneously equal to zero,  
\begin{equation}
\label{roots}
    \zeta_{1,2}=\frac{1}{\sqrt{2}(r^{(2)}-ir^{(5)})} \left(
    r^{(3)}-ir^{(6)}\pm\sqrt{(r^{(3)}-ir^{(6)})^2-\frac{2}{\sqrt{3}}(1-2r^{(8)})(r^{(2)}-ir^{(5)})}
    \right).
\end{equation}
The Majorana constellation, in this case consisting of two points, can be found by projecting the roots \eqref{roots} to the unit sphere by using the inverse stereographic projection from the south pole. In the cut $r^{(2)}=r^{(5)}=0$ there is always a star at the south pole. Even in this simple case, we can see that the resulting expressions turn complicated. 

\section{Relations between representations for pure permutation-symmetric multiqubit states}
\label{Section4}

The Majorana representation, originally defined for pure qudit states, can be straightforwardly extended to the case of multiqubit states with a symmetric state-vector due to the isomorphism between $n$-level pure states Hilbert spaces and those of permutation-symmetric $n$-qubit pure states. This idea has been used to analyze different properties of multipartite systems in terms of the standard Majorana representation in a large range of applications, see, e.g., \cite{Dev.Sud.Rah:12, Gol.Gra.etal:22,Mar.Gir.etal:10,Aul:10,Aul.Mar.Mur:10,Mar:11,Aul:12,Ser.Mir:23} and references therein; for a recent review of symmetric states, discussing the Majorana representation for their study, see \cite{Mar.Mul.etal:26}.

In this section, we will analyze how, in the case of pure two- and three-qubit permutation-symmetric states, both representations are related. Namely, we will find explicit analytical expressions of the basic elements describing correlation properties, defined in the Bloch representation, in terms of the corresponding Majorana stars (see Theorems \ref{Theorem2} and \ref{Theorem3} below).

\subsection{Two-qubit states}
A two-qubit pure state 
\begin{equation}
\label{2qstate}
|\psi_{2}\rangle=\psi^{(00)} |00\rangle+\psi^{(01)}  |01\rangle+\psi^{(10)}  |10\rangle+\psi^{(11)}  |11\rangle\in \mathbb{C}^2\otimes\mathbb{C}^2
\end{equation}
is permutation-symmetric when $\psi^{(01)}=\psi^{(10)} $; in this case it can be mapped into a three-level state by the correspondence $|00\rangle \rightarrow |1\rangle$, $(|01\rangle+|10\rangle)/\sqrt{2} \rightarrow |2\rangle$ and $|11\rangle \rightarrow |3\rangle$. Then, the components $\psi^{(1)} ,\psi^{(2)}, \psi^{(3)}$ of the corresponding three-level state in terms of the components of (\ref{2qstate}) are $\psi^{(1)}=\psi^{(00)}, \psi^{(2)}=\sqrt{2}\psi^{(01)}$ and $\psi^{(3)}=\psi^{(11)} $. Then, by \eqref{majo}, the points that represent the state (\ref{2qstate}) on the unit sphere are the stereographic projections, from the south pole, of the roots of the Majorana polynomial
\begin{equation}
    P_{|\psi_2\rangle}(\zeta)=-\psi^{(00)} \zeta^2+2\psi^{(01)} \zeta-\psi^{(11)}\ .
\end{equation}

The behavior of the Majorana constellation under the action of rotations is similar to that in the one-particle case: when applying a physical rotation, the new state in the projective space $P(\mathbb{C}^2\otimes\mathbb{C}^2)$ has a constellation which is the original one rotated in three-dimensional space by the same angle and axis as the physical rotation, since a symmetrized state of $s$ spin-$1/2$ states is equivalent, under the action of the rotation group, to a single spin-$s$ state; here we have exactly the same situation, but not constrained to the particular description of states of spin systems.

In terms of the Pauli representation, the particular case for $d=2$ of the representation given in \eqref{2qudits}, a two-qubit state density matrix is expressed as
\begin{equation}
\label{rho_two_qubits}
    \rho=\frac{1}{4}\left(
    \mathbb{I}_{\mathbb{C}^2}\otimes \mathbb{I}_{\mathbb{C}^2}+r_{\rho}\cdot\sigma\otimes \mathbb{I}_{\mathbb{C}^2}+\mathbb{I}_{\mathbb{C}^2} \otimes s_{\rho}\cdot\sigma+\sum_{i,j=1}^3T_{\rho}^{(ij)}\sigma^{(i)}\otimes \sigma^{(j)}
    \right),  
\end{equation}
where $\sigma=(\sigma^{(1)},\sigma^{(2)},\sigma^{(3)})$; $\sigma^{(i)},\ i=1,2,3$, are the Pauli operators; $r_{\rho}^{(i)}=\mathrm{tr}(\rho\ \sigma^{(i)}\otimes\mathbb{I}_{\mathbb{C}^2})$, $s_{\rho}^{(i)}=\mathrm{tr}(\rho\ \mathbb{I}_{\mathbb{C}^2}\otimes\sigma^{(i)})$ and $T_{\rho}^{(ij)}=\mathrm{tr}(\rho\  \sigma^{(i)}\otimes \sigma^{(j)})$. The parametrization in \eqref{rho_two_qubits} is commonly referred to as the \emph{Fano form} \cite{Fan:83} of the density matrix.

The purity condition $\rho^2=\rho$ implies that: 
\begin{equation}
\label{purity1}
\|r_{\rho}\|^2+\|s_{\rho}\|^2+\sum_{i,j=1}^3(T_{\rho}^{(ij)})^2=3,\  r_{\rho}^{(c)}=\sum_{j=1}^3s_{\rho}^{(j)}T_{\rho}^{(cj)}, \   s_{\rho}^{(c)}=\sum_{i=1}^3r_{\rho}^{(i)}T_{\rho}^{(ic)}\ . 
\end{equation}
If the state is permutation-symmetric then $r_{\rho}=s_{\rho}$ and the correlation matrix $T_{\rho}$ is a symmetric matrix.

A pure symmetric $2$-qubit state $|\psi_2\rangle$ can be represented by two Majorana stars on the unit sphere. We can write the state in terms of the spherical coordinates of this two points $\hat{n}_1(\theta_1,\phi_1)$ and $\hat{n}_2(\theta_2,\phi_2)$, namely,
\begin{equation}
\label{2qsymm_stars}
|\psi_2\rangle = \frac{|\psi_{\hat{n}_1}\rangle\otimes |\psi_{\hat{n}_2}\rangle+|\psi_{\hat{n}_2}\rangle\otimes |\psi_{\hat{n}_1}\rangle}{\sqrt{2(1+|\langle \psi_{\hat{n}_1}|\psi_{\hat{n}_2}\rangle|^2)}}\ ,
\end{equation}
where, for $k=1,2$, 
\begin{equation}
\label{ind}
    |\psi_{\hat{n}_k}\rangle=\cos \frac{\theta_k}{2}|0\rangle+e^{i \phi_k}\sin\frac{\theta_k}{2}|1\rangle
\end{equation}
are the single-qubit states with a star in the direction $\hat{n}_k,\ k=1,2$.

Then, we can derive the following result for the symmetrized state that has this two points $\hat{n}_1(\theta_1,\phi_1)$ and $\hat{n}_2(\theta_2,\phi_2)$ as the stars forming its Majorana constellation.

 \begin{theorem}
 \label{Theorem2}
 Let $\rho_2=|\psi_2\rangle\langle\psi_2|\in\mathcal{L}_{+,1}(\mathbb{C}^2\otimes\mathbb{C}^2)$ be the density matrix of a pure state with permutation-symmetric vector $|\psi_2\rangle = \left(|\psi_{\hat{n}_1}\rangle\otimes |\psi_{\hat{n}_2}\rangle+|\psi_{\hat{n}_2}\rangle\otimes |\psi_{\hat{n}_1}\rangle\right)/\sqrt{2(1+|\langle \psi_{\hat{n}_1}|\psi_{\hat{n}_2}\rangle|^2)}$.
 Then, the Bloch vector of the reduced state $r_{\rho_2}=s_{\rho_2}$ and the correlation matrix $T_{\rho_2}$ have the form
 \begin{equation}
 \label{rT_two_qubits}
     r_{\rho_2}=\frac{2(\hat{n}_1+\hat{n}_2)}{3+\hat{n}_1\cdot\hat{n}_2}\ , \ \ T_{\rho_2}=\frac{(1-\hat{n}_1\cdot \hat{n}_2)\mathbb{I}_{\mathbb{C}^3}+2\hat{n}_1\boldsymbol{\vee} \hat{n}_2}{3+\hat{n}_1\cdot\hat{n}_2}\ ,
 \end{equation}
 where $a \vee b:=ab^T+ba^T$ denotes the symmetrized tensor product\footnote{See \eqref{gen-vee} for a general definition, note that here we are not using the standard normalization dividing by the cardinality of the permutation group.} of two vectors in $\mathbb{R}^3$, i. e., the matrix with entries $(a \vee b)^{(kl)}:=a^{(k)}b^{(l)}+b^{(k)}a^{(l)}$, and $\hat{n}_1,\hat{n}_2$, are the Majorana stars of the corresponding three-level state obtained by the correspondence $|00\rangle \rightarrow |1\rangle$, $(|01\rangle+|10\rangle)/\sqrt{2} \rightarrow |2\rangle$ and $|11\rangle \rightarrow |3\rangle$.
 \end{theorem}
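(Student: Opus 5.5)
The plan is a coordinate-free computation based on the Bloch-sphere identities for single-qubit states. Write $|a\rangle:=|\psi_{\hat n_1}\rangle$ and $|b\rangle:=|\psi_{\hat n_2}\rangle$, and set $c:=\langle a|b\rangle$. The standard facts to use are
\[
|a\rangle\langle a|=\tfrac12(\mathbb{I}+\hat n_1\cdot\sigma),\qquad |c|^2=\tfrac{1+\hat n_1\cdot\hat n_2}{2}.
\]
The second one gives the normalization directly:
\[
2(1+|c|^2)=3+\hat n_1\cdot\hat n_2,
\]
which is the common denominator in \eqref{rT_two_qubits}. Expanding $\rho_2=|\psi_2\rangle\langle\psi_2|$ produces four terms, namely $|ab\rangle\langle ab|$, $|ba\rangle\langle ba|$, $|ab\rangle\langle ba|$ and $|ba\rangle\langle ab|$. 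Every expectation value we need splits accordingly into diagonal contributions and cross contributions.

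\textbf{The reduced Bloch vector.} We compute
\[
r^{(i)}=\mathrm{tr}(\rho_2\,\sigma^{(i)}\otimes\mathbb{I}).
\]
The diagonal terms contribute $\hat n_1^{(i)}+\hat n_2^{(i)}$, because the untouched factor has unit norm. The cross terms contribute $2\,\mathrm{Re}\big(\langle a|\sigma^{(i)}|b\rangle\langle b|a\rangle\big)$, and
\[
\langle a|\sigma^{(i)}|b\rangle\langle b|a\rangle=\mathrm{tr}\big(\sigma^{(i)}|b\rangle\langle b|a\rangle\langle a|\big)=\tfrac14\,\mathrm{tr}\big(\sigma^{(i)}(\mathbb{I}+\hat n_2\cdot\sigma)(\mathbb{I}+\hat n_1\cdot\sigma)\big).
\]
We then use $(u\cdot\sigma)(v\cdot\sigma)=(u\cdot v)\mathbb{I}+i(u\times v)\cdot\sigma$ together with the traces of products of at most three Pauli matrices. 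This gives
\[
\tfrac12\big(\hat n_1+\hat n_2+i\,\hat n_2\times\hat n_1\big)^{(i)},
\]
whose real part is $\tfrac12(\hat n_1+\hat n_2)^{(i)}$. Adding the two pieces yields $2(\hat n_1+\hat n_2)$, and dividing by $3+\hat n_1\cdot\hat n_2$ gives the stated $r_{\rho_2}$. Permutation symmetry gives $s_{\rho_2}=r_{\rho_2}$.

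\textbf{The correlation matrix.} We compute
\[
T^{(ij)}=\mathrm{tr}(\rho_2\,\sigma^{(i)}\otimes\sigma^{(j)}).
\]
The diagonal terms give $\hat n_1^{(i)}\hat n_2^{(j)}+\hat n_2^{(i)}\hat n_1^{(j)}=(\hat n_1\vee\hat n_2)^{(ij)}$. The cross terms give
\[
2\,\mathrm{Re}\big(\langle a|\sigma^{(i)}|b\rangle\langle b|\sigma^{(j)}|a\rangle\big)=2\,\mathrm{Re}\,\mathrm{tr}\big(\sigma^{(i)}P_2\,\sigma^{(j)}P_1\big),\qquad P_k=\tfrac12(\mathbb{I}+\hat n_k\cdot\sigma).
\]
Expanding the trace requires the identity for four Pauli matrices,
\[
\mathrm{tr}(\sigma^{(i)}\sigma^{(k)}\sigma^{(j)}\sigma^{(l)})=2(\delta_{ik}\delta_{jl}-\delta_{ij}\delta_{kl}+\delta_{il}\delta_{kj}),
\]
and the observation that every three-Pauli trace is $2i\varepsilon$, hence purely imaginary, so it drops out after taking the real part.

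This bookkeeping is the main obstacle; if the index work becomes error-prone, I would verify it by rotational covariance instead. The real part of the trace is
\[
\tfrac14\cdot 2\big(\delta_{ij}+\hat n_2^{(i)}\hat n_1^{(j)}+\hat n_1^{(i)}\hat n_2^{(j)}-\delta_{ij}\,\hat n_1\cdot\hat n_2\big).
\]
Multiplying by $2$, the cross terms contribute $(1-\hat n_1\cdot\hat n_2)\,\delta_{ij}+(\hat n_1\vee\hat n_2)^{(ij)}$. Together with the diagonal terms and the normalization, the total is
\[
\frac{(1-\hat n_1\cdot\hat n_2)\,\mathbb{I}+2\,\hat n_1\vee\hat n_2}{3+\hat n_1\cdot\hat n_2},
\]
which is the claimed $T_{\rho_2}$.

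\textbf{Checks and the Majorana identification.} As sanity checks, the trace of $T$ should equal $1$ for a symmetric (triplet) state, and for $\hat n_1=\hat n_2$ one should recover a product state. Finally, one must confirm that $\hat n_1,\hat n_2$ really are the Majorana stars under the stated correspondence. Expanding the state gives $\psi^{(00)}\propto\cos\frac{\theta_1}{2}\cos\frac{\theta_2}{2}$, $\psi^{(01)}\propto\frac12\big(\cos\frac{\theta_1}{2}\,e^{i\phi_2}\sin\frac{\theta_2}{2}+\cdots\big)$, and so on. Substituting these into the Majorana polynomial
\[
-\psi^{(00)}\zeta^2+2\psi^{(01)}\zeta-\psi^{(11)}
\]
shows that it is proportional to $-\prod_k\big(\cos\tfrac{\theta_k}{2}\,\zeta-e^{i\phi_k}\sin\tfrac{\theta_k}{2}\big)$. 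Its roots are therefore $\tan\frac{\theta_k}{2}e^{i\phi_k}$, consistent with the stereographic convention of Proposition~\ref{proposition_densityM}.
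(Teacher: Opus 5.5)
Your proof is correct, and it takes a different route from the paper.

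The paper works entirely in spherical coordinates. It expands $|\psi_2\rangle$ in the computational basis and obtains $N^2=3+\hat n_1\cdot\hat n_2$ from a trigonometric identity. It then writes out the three components of $r_{\rho_2}$ and the six independent entries of $T_{\rho_2}$ as explicit trigonometric expressions, and matches them entry by entry against $\hat n_1\vee\hat n_2$ written in coordinates.

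You instead split $\rho_2$ into direct terms ($|ab\rangle\langle ab|$, $|ba\rangle\langle ba|$) and exchange terms. You evaluate the exchange terms as traces of Pauli matrices against the projectors $P_k=\frac12(\mathbb{I}+\hat n_k\cdot\sigma)$. Your two- and four-Pauli contractions are right, and so are the cancellations of the imaginary pieces: the $i\,\hat n_2\times\hat n_1$ term in $r$ and the three-Pauli traces in $T$. Together they give exactly the stated formulas.

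What your approach buys:
\begin{itemize}
\item It is basis-free and manifestly rotation-covariant.
\item It shows where each term comes from: the direct terms give one copy of $\hat n_1\vee\hat n_2$, and the exchange terms give $(1-\hat n_1\cdot\hat n_2)\mathbb{I}+\hat n_1\vee\hat n_2$.
\item It carries over naturally to the three-qubit case of Theorem~\ref{Theorem3}. There the paper only reports the outcome of computing ``all the needed traces''; your method would need longer cyclic traces of projectors and Pauli matrices, but no new idea.
\end{itemize}

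You also check that $\hat n_1,\hat n_2$ really are the Majorana stars under the stated correspondence, via the roots $\tan\frac{\theta_k}{2}e^{i\phi_k}$. The paper simply takes this for granted when it introduces \eqref{2qsymm_stars}. The only thing to add is the convention that a star at the south pole ($\cos\frac{\theta_k}{2}=0$) corresponds to a drop in degree, i.e.\ a root at infinity.

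The paper's coordinate computation, by contrast, needs no Pauli-algebra identities and produces the explicit entries directly, at the cost of length and transparency.
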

\begin{proof} By \eqref{ind} and the above definition \eqref{2qsymm_stars} of $|\psi_2\rangle$, we found
    \begin{align}
    |\psi_2\rangle &= \frac{1}{\sqrt{2(1+|\langle \psi_{\hat{n}_1}|\psi_{\hat{n}_2}\rangle|^2)}}\left(|\psi_{\hat{n}_1}\rangle\otimes |\psi_{\hat{n}_2}\rangle+|\psi_{\hat{n}_2}\rangle\otimes |\psi_{\hat{n}_1}\rangle\right)\\
    & =\frac{1}{N}\left(2\cos \frac{\theta_1}{2}\cos \frac{\theta_2}{2}|00\rangle+
   2e^{i(\phi_1+\phi_2)}\sin\frac{\theta_1}{2} \sin\frac{\theta_2}{2}|11\rangle \right. \nonumber\\
   &+\left.
   \left(
    \cos \frac{\theta_1}{2}\sin \frac{\theta_2}{2}e^{i\phi_2}+\cos \frac{\theta_2}{2}\sin \frac{\theta_1}{2}e^{i\phi_1}
    \right)(|01\rangle+|10\rangle)\right)\ ,\nonumber
\end{align}
where
\begin{align}
    N&=\sqrt{2(1+|\langle \psi_{\hat{n}_1}|\psi_{\hat{n}_2}\rangle|^2)}\nonumber\\
    &=\sqrt{2\left(1+\left\lvert \cos\frac{\theta_1}{2}\cos\frac{\theta_2}{2}+e^{i(\phi_2-\phi_1)}\sin\frac{\theta_1}{2} \sin\frac{\theta_2}{2}\right\rvert^2\right)}\nonumber\\
    &=\sqrt{3+\sin \theta_1 \sin \theta_2 \sin \phi_1 \sin\phi_2+\sin\theta_1 \sin \theta_2 \cos \phi_1\cos\phi_2+\cos\theta_1 \cos
  \theta_2}\\&=\sqrt{3+\hat{n}_1\cdot\hat{n}_2}\ .\nonumber
\end{align}
Computing the corresponding Bloch vector $r_{\rho_2}=s_{\rho_2}$  of the state $\rho_2=  |\psi_2\rangle\langle \psi_2|$, we get
\begin{align}
  r_{\rho_2}&=\sum_{i=1}^3\mathrm{tr}(\rho_2 \sigma^{(i)}\otimes\mathbb{I}_{\mathbb{C}^2})\hat{e}_i=\sum_{i=1}^3\langle \psi_2|\sigma^{(i)}\otimes\mathbb{I}_{\mathbb{C}^2}|\psi_2\rangle\hat{e}_i\nonumber\\
  &=2
(\sin\theta_1\cos\phi_1+\sin\theta_2\cos\phi_2,\sin\theta_1\sin\phi_1+\sin\theta_2\sin\phi_2,\cos\theta_1+\cos\theta_2)/(3+\hat{n}_1\cdot \hat{n}_2)\ , \\ &=\frac{2(\hat{n}_1+\hat{n}_2)}{3+\hat{n}_1\cdot \hat{n}_2}\ .\nonumber
\end{align}
For the state $\rho_2$,
the diagonal elements $ T_{\rho_2}^{(ii)}=\langle \psi_2|\sigma^{(i)}\otimes\sigma^{(i)}|\psi_2\rangle, i=1,2,3$, of its correlation matrix $T_{\rho_2}$  are
\small\begin{align}
\label{t_diag}
    T_{\rho_2}^{(ii)}&=\frac{1}{N}\left(1
    -(1-4\delta_{1i})\sin\theta_1\sin\theta_2\cos\phi_1\cos\phi_2 -(1-4\delta_{2i})\sin\theta_1\sin\theta_2\sin\phi_1\sin\phi_2\right. \\ & \left. -(1-4\delta_{3i})\cos\theta_1\cos\theta_2
    \right),\nonumber
\end{align}
\normalsize and the other three independent entries are
\begin{align}
\label{off_diagonal}
    T_{\rho_2}^{(12)}&=2(\sin \theta_1 \sin \theta_2 \sin\phi_1 \cos\phi_2+\sin \theta_1\sin \theta_2\cos\phi_1\sin\phi_2)/N\  ,\nonumber\\
     T_{\rho_2}^{(13)}&=2(\sin \theta_1 \cos\theta_2\cos \phi_1 +\cos \theta_1 \sin\theta_2 \cos\phi_2)/N\ ,\\
      T_{\rho_2}^{(23)}&=2(\cos\theta_2\sin\theta_1\sin\phi_1+\cos\theta_1\sin\theta_2\sin\phi_2)/N\ .\nonumber
\end{align}
On the other hand, a straightforward computation gives
\small{
\begin{equation}
\label{n1veen2}
\hat{n}_1\vee\phantom{.} \hat{n}_2=
\left(
\begin{array}{ccc}
 2 s\theta _1 s\theta _2 c\phi _1 c \phi _2 & s \theta _1 s \theta _2 s \phi _1 c \phi _2+s
  \theta_1 s \theta _2 s \phi_2 c \phi_1 & s \theta_1 c \theta _2 c \phi_1 + s \theta_2 c
 \theta_1 c\phi_2 \\
s\theta_1 s\theta_2 s\phi_1 c \phi_2+ s\theta_1 s \theta _2 s\phi_2 c\phi _1 & 2 s \theta _1 s\theta _2 s\phi_1 s\phi _2 &  s\theta_1 c\theta_2 s\phi_1+ s\theta_2 c\theta_1  s\phi_2\\
s\theta _1c\theta _2c\phi _1+s\theta _2c\theta _1c\phi _2& s \theta _1c\theta _2s\phi _1+s\theta _2c\theta_1 s\phi_2 & 2 c \theta_1 c\theta_2 \\
\end{array}
\right),
\end{equation}}
where $s\alpha, c\alpha$ denote $\sin\alpha, \cos\alpha$, respectively. The off-diagonal entries of \eqref{n1veen2} coincide with the off-diagonal entries of the correlation matrix $T_{\rho_2}$ in \eqref{off_diagonal}, when dividing them by $\frac{N}{2}=\sqrt{3+\hat{n}_1\cdot\hat{n}_2}/2$. Adding to diagonal entries of \eqref{n1veen2}, multiplied by 2, the term $(1-\hat{n}_1\cdot\hat{n}_2)/2=(1-s\theta_1s\theta_2(c\phi_1 c\phi_2+s\phi_1 s\phi_2)-c\theta_1 c\theta_2)/2$ and dividing by $N=\sqrt{3+\hat{n}_1\cdot\hat{n}_2}$, we found the diagonal entries of the correlation matrix in \eqref{t_diag},  therefore \eqref{rT_two_qubits} follows.
\end{proof}

This theorem then allows us to have a geometrical interpretation of the direction of the vector sum of the directions of the Majorana stars of the constituent qubits of a permutation-symmetric two-qubit state as the direction of the corresponding Bloch vector in the Bloch sphere representation. For coherent states ($\hat{n}_1=\hat{n}_2=:\hat{n}$) we have $r=\hat{n}$ and $T^{(ij)}=\hat{n}^{(i)}\hat{n}^{(j)}$, corresponding to separable states; and, in the anticoherent case ($\hat{n}_1=-\hat{n}_2$), the Bloch vector vanishes, so corresponding to maximally entangled states, the correlation matrix takes the form $T^{(ij)}=\delta_{ij}-2\hat{n}^{(i)}\hat{n}^{(j)}$.

\subsection{Three-qubit states}

In a similar way to the two-qubit case, it is possible to express the Bloch representation parameters in terms of the Majorana stars for a permutation-symmetric pure state of three qubits. For pure symmetric three-qubit states we have a correspondence with four-level states $|000\rangle\rightarrow|1\rangle$, $(|001\rangle+|010\rangle+|100\rangle)/\sqrt{3}\rightarrow|2\rangle$, $(|011\rangle+|101\rangle+|110\rangle)/\sqrt{3}\rightarrow|3\rangle$ and $|111\rangle\rightarrow|4\rangle$, it is easy to note that the Majorana stars representing a symmetric three-qubit state $|\psi_{3}\rangle=\sum_{i,j,k=0}^1 \psi^{(ijk)}|ijk\rangle$ in the standard basis, where $\psi^{(ijk)}$ are fully symmetric coefficients under permutations of indices, is given by the stereographically projected roots of the polynomial
\begin{equation}
    P_{|\psi_3\rangle}(\zeta)=-\psi^{(000)}\zeta^3+3\psi^{(001)}\zeta^2-3\psi^{(011)}\zeta+\psi^{(111)}\ .
\end{equation}

On the other hand, in the Bloch representation a qutrit state density matrix $\rho\in\mathcal{L}_{+,1}(\mathbb{C}^2\otimes \mathbb{C}^2 \otimes \mathbb{C}^2)$ can be written as
\begin{align}
    \rho&=\frac{1}{8}
    \left(
\mathbb{I}_{\mathbb{C}^2}\otimes\mathbb{I}_{\mathbb{C}^2}\otimes \mathbb{I}_{\mathbb{C}^2}
  +r_1\cdot\sigma\otimes\mathbb{I}_{\mathbb{C}^2}\otimes \mathbb{I}_{\mathbb{C}^2}
   +\mathbb{I}_{\mathbb{C}^2}\otimes r_2\cdot\sigma\otimes\mathbb{I}_{\mathbb{C}^2}
    +\mathbb{I}_{\mathbb{C}^2}\otimes \mathbb{I}_{\mathbb{C}^2}\otimes r_3\cdot\sigma \right. \\
   &+\sum_{ij}T_{12}^{(ij)} \sigma^{(i)}\otimes\sigma^{(j)}\otimes \mathbb{I}_{\mathbb{C}^2} 
   +\sum_{ij}T_{13}^{(ij)} \sigma^{(i)}\otimes \mathbb{I}_{\mathbb{C}^2} \otimes \sigma^{(j)}
   +\sum_{ij}T_{23}^{(ij)} \mathbb{I}_{\mathbb{C}^2}\otimes\sigma^{(i)}\otimes\sigma^{(j)}\nonumber \\
&+\sum_{ijk}M_{123}^{(ijk)}\sigma^{(i)}\otimes\sigma^{(j)}\otimes \sigma^{(k)}\left.
    \right) ,\nonumber
\end{align}

\noindent where, for $\rho=\sum_{i,j,k}\rho^{(ijk)}|ijk\rangle\langle ijk|$, all the components are defined as $\rho^{(ijk)}:=\mathrm{tr}(\rho X_{i}\otimes X_{j}\otimes X_{k})$, with $X_{0}=\mathbb{I}_{\mathbb{C}^2}$ and $X_{i}:=\sigma^{(i)}$, being $\sigma^{(i)}$ the Pauli operators, $i=1,2,3$, and defining, for all $i,j,k\in \left\lbrace 1,2,3\right\rbrace$, the Bloch vectors $r_1^{(i)}:=\rho^{(i00)},\ r_2^{(i)}:=\rho^{(0i0)},\ r_3^{(i)}:=\rho^{(00i)}$, the correlation matrices on the corresponding subspaces $T_{12}^{(ij)}:=\rho^{(ij0)},\  T_{13}^{(ij)}:=\rho^{(i0j)},\  T_{23}^{(ij)}:=\rho^{(0ij)}$, and the third-order correlation tensor $M_{123}^{(ijk)}:=\rho^{(ijk)}$.

Given a (normalized) permutation-symmetric three-qubit pure  state corresponding to a four-level state with Majorana constellation $\left\lbrace \hat{n}_1,\hat{n}_2,\hat{n}_3\right\rbrace$,
\begin{equation}
\label{psi3}
    |\psi_3\rangle =\frac{1}{\sqrt{18(1+I)}}\sum_{\tau\in S_3} |\psi_{\hat{n}_{\tau(1)}}\rangle\otimes|\psi_{\hat{n}_{\tau(2)}}\rangle\otimes |\psi_{\hat{n}_{\tau(3)}}\rangle\ ,
\end{equation}
where the sum is taken over the set of permutations of three elements $S_3$ and $ |\psi_{\hat{n}_p}\rangle=\cos \frac{\theta_p}{2}|+\rangle+e^{i \phi_p}\sin\frac{\theta_p}{2}|-\rangle , p=1,2,3$, and $I:=(\hat{n}_1\cdot \hat{n}_2+\hat{n}_1\cdot \hat{n}_3+\hat{n}_2\cdot \hat{n}_3)/3$. We can express the above defined parameters $\rho^{(ijk)}$ of its density matrix $\rho_3=|\psi_3\rangle\langle \psi_3|$ in terms of its Majorana stars $\hat{n}_p,\ p=1,2,3$. For this class of states $r_1=r_2=r_3=:r_{\rho_3}$, $T_{12}=T_{13}=T_{23}=:T_{\rho_3}$ due to the symmetry, and the tensor $M_{\rho_3}$, with components $M_{\rho_3}^{(ijk)}:=M_{123}^{(ijk)}$, is fully symmetric under permutations of any of the indices.

After computing all the needed traces of the density matrix of the state \eqref{psi3}, analogously to the two-qubit case, we identified the following structure of the correlation parameters of a pure permutation-symmetric three-qubit state in terms of its Majorana constellation given by the aforementioned isomorphism. We summarize our findings in Theorem \ref{Theorem3}.

\begin{theorem}
\label{Theorem3}
 Let $\rho_3=|\psi_3\rangle\langle\psi_3|\in\mathcal{L}_{+,1}(\mathbb{C}^2\otimes\mathbb{C}^2\otimes\mathbb{C}^2)$ be the density matrix of pure state with permutation-symmetric vector $|\psi_3\rangle$ of the form \eqref{psi3}.
 Then, the corresponding correlation parameters $r_{\rho_3},T_{\rho_3},M_{\rho_3}$, of tensor order 1, 2, and 3, respectively,  can be written in terms of its Majorana stars $\hat{n}_p\in\mathcal{S}^2\subset\mathbb{R}^3,\ p=1,2,3$, as follows:
 \begin{align}
     r_{\rho_3}&=\frac{1}{9(1+I)}\left(6(\hat{n}_1+\hat{n}_2+\hat{n}_3)-\vec{v}\right),\\
     T_{\rho_3}&=\frac{1}{9(1+I)}\left(3(1-I)\mathbb{I}_{\mathbb{R}^3}+3\sum_{i<j}\hat{n}_i\vee\hat{n}_j\right),\\
     M_{\rho_3}&=\frac{1}{9(1+I)}\left(
   S_{\vec{v}}
   +  3\ \hat{n}_1\vee\hat{n}_2\vee\hat{n}_3  \right),
     \label{tensorM}
 \end{align}
 where $\vec{v}:=\hat{n}_1 (1-\hat{n}_2\cdot \hat{n}_3)+\hat{n}_2(1-\hat{n}_1\cdot \hat{n}_3)+\hat{n}_3(1-\hat{n}_1\cdot \hat{n}_2)$,  $I=(\hat{n}_1\cdot \hat{n}_2+\hat{n}_1\cdot \hat{n}_3+\hat{n}_2\cdot \hat{n}_3)/3$, and $S_{\vec{v}}$ 
 denotes the symmetric tensor $S_{\vec{v}}=\sum_{ijk}\left(\delta_{ij}v^{(k)}+ \delta_{ik}v^{(j)}+ \delta_{jk}v^{(i)}\right)\hat{e}_i\otimes\hat{e}_j\otimes\hat{e}_k$; $\hat{e}_i$, $i=1,2,3$, denotes the standard unitary basis vectors in $\mathbb{R}^3$.  Here the symmetric product of $r$ vectors $\hat{n}_1,\dots,\hat{n}_r$ is defined as
 \begin{equation}
     \label{gen-vee}\hat{n}_1\vee\hat{n}_2\vee\dots\vee\hat{n}_r:=\sum_{\tau\in S_r} \hat{n}_{\tau(1)}\otimes\hat{n}_{\tau(2)}\otimes\dots\otimes\hat{n}_{\tau(r)}.
 \end{equation}
 \end{theorem}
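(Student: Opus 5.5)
Rather than expand $|\psi_3\rangle$ in the computational basis as in Theorem \ref{Theorem2}, I will compute coordinate-free, using single-qubit Bloch identities. Write $|k\rangle:=|\psi_{\hat n_k}\rangle$. Then $\langle k|\sigma^{(i)}|k\rangle=\hat n_k^{(i)}$ and $|\langle k|l\rangle|^2=(1+\hat n_k\cdot\hat n_l)/2$. For $k\neq l$, the operator $|l\rangle\langle k|$ satisfies $\mathrm{tr}(|l\rangle\langle k|)=\langle k|l\rangle$ and
\[
\langle k|\sigma^{(i)}|l\rangle=\langle k|l\rangle\, \frac{\hat n_k^{(i)}+\hat n_l^{(i)}+i(\hat n_k\times\hat n_l)^{(i)}}{1+\hat n_k\cdot\hat n_l}.
\]
This follows from $|l\rangle\langle l|\,|k\rangle\langle k|=\frac14(1+\hat n_l\cdot\sigma)(1+\hat n_k\cdot\sigma)$ and taking traces against $\sigma^{(i)}$.

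Expanding $\rho_3$ gives
\[
\rho_3=\frac{1}{18(1+I)}\sum_{\tau,\tau'\in S_3}\bigotimes_{a=1}^3|\psi_{\hat n_{\tau(a)}}\rangle\langle\psi_{\hat n_{\tau'(a)}}|.
\]
The expectation of $X_i\otimes X_j\otimes X_k$ then factorizes into a product of three single-qubit matrix elements. The terms are grouped by the permutation $\pi=\tau'\tau^{-1}$, which is the identity, a transposition (three of them), or a 3-cycle (two of them); each class comes with a sum over $\tau$ that symmetrizes the factor positions.

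\textbf{Normalization.} Setting all $X=\mathbb I$ reproduces $18(1+I)$, which confirms the prefactor: the identity class gives $6$, the transpositions give $2\sum_{k<l}|\langle k|l\rangle|^2$, and the 3-cycles give $4\,\mathrm{Re}\,\langle1|2\rangle\langle2|3\rangle\langle3|1\rangle$. This is consistent only if the 3-cycle product equals $(1+3I)/4$ times the right quantity. I will check $\mathrm{Re}\langle1|2\rangle\langle2|3\rangle\langle3|1\rangle=\frac14(1+\hat n_1\cdot\hat n_2+\hat n_2\cdot\hat n_3+\hat n_1\cdot\hat n_3)$, which follows from $\mathrm{tr}(P_1P_2P_3)$ with $P_k=\frac12(1+\hat n_k\cdot\sigma)$.

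\textbf{Computing the three tensors.}
\begin{itemize}
\item[(a)] $r_{\rho_3}$: one factor carries $\sigma^{(i)}$ and two carry the identity.
\item[(b)] $T_{\rho_3}$: two factors carry Pauli matrices.
\item[(c)] $M_{\rho_3}$: all three factors carry Pauli matrices.
\end{itemize}
To avoid dividing by $1+\hat n_k\cdot\hat n_l$, I will express every cyclic product directly as a trace of products of the projectors $P_k$ with Pauli matrices inserted, e.g. $\mathrm{tr}(P_1\sigma^{(i)}P_2\sigma^{(j)}P_3\sigma^{(k)})$. Each such trace is evaluated with $\sigma^{(a)}\sigma^{(b)}=\delta_{ab}+i\epsilon_{abc}\sigma^{(c)}$ and reduces to polynomials in $\delta$, $\epsilon$ and the $\hat n$'s.

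\textbf{Simplification.} Summing over both orientations of each cycle, together with the symmetrization, kills all terms odd in $\epsilon$, since these are imaginary parts that cancel against the conjugate terms. Products of two $\epsilon$'s are contracted into $\delta$'s, and triple products $(\hat n_1\times\hat n_2)\cdot\hat n_3$ must drop out.

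\textbf{Matching to the statement.}
\begin{itemize}
\item[(a)] For $r$: the identity class gives $6\sum\hat n_k$ pieces, and the transposition and cycle classes produce the $(\hat n_a\cdot\hat n_b)\hat n_c$ combinations assembling into $-\vec v$.
\item[(b)] For $T$: this produces $(1-I)\mathbb I$ plus $\sum_{i<j}\hat n_i\vee\hat n_j$.
\item[(c)] For $M$: the identity class gives $\hat n_1\vee\hat n_2\vee\hat n_3$ up to the factor $3$ after normalization. The $\delta$-terms arising from $\sigma\sigma$ contractions collect into $S_{\vec v}$.
\end{itemize}

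\textbf{Main obstacle and fallback.} The hard step is the bookkeeping in (c): keeping track of which $\delta$- and $\epsilon\epsilon$-terms survive, and verifying that they assemble exactly into $S_{\vec v}$ with the same $\vec v$ as in $r_{\rho_3}$. Should this become unwieldy, I will use rotational covariance. Both sides transform as rank-$1$, $2$, $3$ tensors under the joint rotation $D(R)^{\otimes3}$ and $\hat n_k\mapsto R\hat n_k$, and both are symmetric polynomials of bounded degree in the $\hat n$'s divided by $1+I$. It therefore suffices to match finitely many coefficients, which can be fixed by evaluating special constellations: all stars equal, two equal and one antipodal, the three stars on a great circle at $120^\circ$, and stars along the coordinate axes. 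I will also use the consistency checks $\mathrm{tr}_{jk}$-type contractions, such as $\sum_i M^{(iik)}=r^{(k)}$, which follow from $\sum_i\sigma^{(i)}\otimes\sigma^{(i)}$ acting as $2\,\mathrm{SWAP}-\mathbb I$ on symmetric states.
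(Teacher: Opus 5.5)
Your route is sound and differs from the paper's. The paper gives no structural argument. It expands $|\psi_3\rangle$ in the computational basis as in the proof of Theorem~2, evaluates every trace in spherical coordinates, and then recognizes the pattern. Your grouping by $\pi=\tau'\tau^{-1}$ into cycle traces of the $P_k$ with Paulis inserted explains that pattern:
\begin{itemize}
\item each $P_k$ occurs exactly once in every term, so the numerators are multilinear in the $\hat n_k$;
\item odd-$\epsilon$ terms carry a factor $i$ and cancel, so only $\delta$'s and $\hat n$'s survive and the tensors are $O(3)$-covariant;
\item the denominator is $6\,\mathrm{perm}(\langle k|l\rangle)=18(1+I)$.
\end{itemize}
Your fallback then gives the most economical complete proof. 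Put $u:=\sum_k\hat n_k$ and $w:=\sum_{\mathrm{cyc}}\hat n_k(\hat n_l\cdot\hat n_m)$, so that $\vec v=u-w$. Multilinearity, label and index symmetry, and the parity of the index count force the numerators over $9(1+I)$ to be $a\,u+b\,w$ for $r$, $\alpha\,\mathbb I+\beta I\,\mathbb I+\gamma\sum_{k<l}\hat n_k\vee\hat n_l$ for $T$, and $a'S_u+b'S_w+c'\,\hat n_1\vee\hat n_2\vee\hat n_3$ for $M$. Product states together with the $W$ state already fix all coefficients to the stated values. For the $W$ state (stars $\hat z,\hat z,-\hat z$) one has $r=\hat z/3$, $T=\mathrm{diag}(2/3,2/3,-1/3)$, $M^{(333)}=-1$ and $M^{(113)}=M^{(223)}=2/3$. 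GHZ then serves as an independent check. The paper's computation is mechanical and suited to computer algebra, but it does not explain why the answer has this form. Yours does, and it is the natural starting point for more qubits.

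Fix three slips before you execute.
\begin{itemize}
\item \textbf{Normalization weights.} Each class carries the factor $6$ from the sum over $\tau$, and there are two $3$-cycles. The contributions are therefore $6$, $6\sum_{k<l}|\langle k|l\rangle|^2$ and $12\,\mathrm{Re}\,\mathrm{tr}(P_1P_2P_3)$, summing to $6+9(1+I)+3(1+3I)=18(1+I)$. Your weights $6,2,4$ give $10+6I$ instead.
\item \textbf{Sign of the cross product.} Your off-diagonal matrix element has the wrong sign: $\mathrm{tr}(\sigma^{(i)}P_lP_k)=\langle l|k\rangle\langle k|\sigma^{(i)}|l\rangle=\tfrac12(\hat n_k+\hat n_l+i\,\hat n_l\times\hat n_k)^{(i)}$. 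For $\hat n_k=\hat z$, $\hat n_l=\hat x$ the $y$-component is $-i/\sqrt2$, not $+i/\sqrt2$. This is harmless inside $2$-cycles, but it would corrupt $3$-cycle terms if combined with $\mathrm{Im}\,\mathrm{tr}(P_1P_2P_3)$. Stick to traces, as you planned.
\item \textbf{Allocation in the matching step.} In units of $1/(9(1+I))$:
\begin{itemize}
\item the identity class contributes only $u$ to $r$ and $\tfrac12\,\hat n_1\vee\hat n_2\vee\hat n_3$ to $M$;
\item the transpositions contribute $\tfrac12(5u+w)$ to $r$ and $\tfrac12S_{\vec v}+\tfrac32\,\hat n_1\vee\hat n_2\vee\hat n_3$ to $M$;
\item the $3$-cycles contribute $\tfrac12(3u+w)$ to $r$ and $\tfrac12S_{\vec v}+\hat n_1\vee\hat n_2\vee\hat n_3$ to $M$.
\end{itemize}
Only the totals give $6u-\vec v$ and $S_{\vec v}+3\,\hat n_1\vee\hat n_2\vee\hat n_3$.
\end{itemize}
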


\begin{remark}
\label{Remark1}
    The components of tensor $M_{\rho_3}$ have the form 
    \begin{equation}
        M_{\rho_3}^{(ijk)}=\frac{1}{9(1+I)}\left( \delta_{ij}v^{(k)}+ \delta_{ik}v^{(j)}+ \delta_{jk}v^{(i)}+3\sum_{\tau\in S_3} \hat{n}_{\tau(1)}^{(i)} \hat{n}_{\tau(2)}^{(j)}  \hat{n}_{\tau(3)}^{(k)} \right).
    \end{equation}
    
     Due to the symmetry, $M_{\rho_3}$ has only $10$ independent components, namely:
    \begin{itemize}
        \item[(i)] three with all equal indices, which can be expressed as 
        \begin{equation}
            M^{(iii)}_{\rho_3}=\frac{1}{9(1+I)}\left(3v^{(i)}+18\hat{n}_1^{(i)} \hat{n}_2^{(i)} \hat{n}_3^{(i)}\right),
        \end{equation}
        \item[(ii)] six with two equal indices and one different, of the form
        \begin{equation}
            M^{(iij)}_{\rho_3}=\frac{1}{9(1+I)}\left(v^{(j)}+3 \sum_{\tau\in S_3}\hat{n}_{\tau(1)}^{(i)}\hat{n}_{\tau(2)}^{(i)}\hat{n}_{\tau(3)}^{(j)}\right) ,
        \end{equation}
    \item[(iii)] and the term with all different indices
        \begin{equation}
            M^{(123)}_{\rho_3}=\frac{1}{9(1+I)}\left(3\ \mathrm{perm}(\hat{n}_1,\hat{n}_2,\hat{n}_3)\right),
        \end{equation}
       \noindent where $\mathrm{perm}(u,v,w)$ denotes the permanent of the matrix with columns $u,v,w$.
    \end{itemize}
\end{remark}

\section{Quantum correlations in terms of the Majorana stars}
\label{Section5}
Our results in the previous section allow us to analyze quantum correlations of various classes of quantum states in terms of their Majorana constellations. This geometric approach could be useful, for example, in problems involving the optimization of rotationally invariant functions on quantum states, since in that case the rotational degrees of freedom  can be removed, i.e., we can optimize a function in the space of shapes of constellations regardless of its particular orientation in space.

As an example of the application of the theorems derived above, in this section we quantify quantum correlations, measured via the geometric quantum discord \cite{Dak.Ved.Bru:10}, for all possible configurations of stars of permutation-symmetric pure two-qubit states. This example also reveals a connection between the concepts of geometric quantum discord and the measure of $1$-anticoherence \cite{baguette2017anticoherence} based on total variance for these classes of states, when considering spin systems, see Theorem \ref{Theorem4}.

 The geometric quantum discord of a two-qudit state $\rho\in \mathcal{L}_{+,1}(\mathbb{C}^{d_{1}}\otimes\mathbb{C}^{d_{2}})$, denoted by $D_{G}(\rho)$, was introduced \cite{Dak.Ved.Bru:10,Luo.Fu:10} as the minimal (squared) distance between $\rho$ and the set $\Omega$ of states with vanishing quantum discord\footnote{In the bipartite case, the states for which quantum discord vanishes are identified with the classical-quantum states, which are states on $\mathcal{H}_A\otimes \mathcal{H}_B$ invariant by measurements on $A$, and have the form
\begin{equation}
    \rho=\sum_i p_i  |i\rangle \langle i | \otimes \rho_{B_i} \ ,
\end{equation}
where the vectors $\lbrace |i\rangle\rbrace$ constitute an orthonormal set on $\mathcal{H}_A$ and $p_i$ is a probability distribution ($p_i >0$ and $\sum_i p_i=1$). Analogously, quantum-classical states are defined, leading to, consequently, two different notions, left and right geometric quantum discord. For permutation-symmetric states this distinction is not relevant.},
\begin{equation}
D_{G}(\rho):=\min_{\chi\in\Omega}\|\rho-\chi\|_{2}^{2},
\label{vedral}%
\end{equation}
where $\|\cdot\|_2$ denotes the Hilbert-Schmidt norm.
Alternative definitions have been defined using other notions of distance between states \cite{Pau.Oli.Sar:13, Spe.Ors:13,Rog.Spe.Ill:16}.
The geometric quantum discord \eqref{vedral} of an arbitrary two-qubit state $\rho\in \mathcal{L}_{+,1}(\mathbb{C}^2 \otimes \mathbb{C}^2)$, is given by \cite{Dak.Ved.Bru:10} :
\begin{equation}
    \label{2qubitDiscordHS}
    D_{G}(\rho)=\frac{1}{4}
    \left(
    \mathrm{tr}G(\rho)-\lambda_{max}\left[G(\rho)\right]
    \right),
\end{equation}
where $G(\rho):=
r_{\rho}r_{\rho}^T+T_{\rho}T_{\rho}^T$, being $r_{\rho}^{(i)}=\mathrm{tr}[\rho \ \sigma^{(i)} \otimes \mathbb{I}_{\mathbb{C}^2}]$ and $T^{(ij)}_{\rho}=\mathrm{tr}[\rho\  \sigma^{(i)} \otimes \sigma^{(j)}]$, for $i,j=1,2,3$; $\lambda_{max}\left[G(\rho)\right]
   $ denotes the largest eigenvalue of $G(\rho)$.

Since the geometric quantum discord is invariant under local rotations, we can focus on the degrees of freedom of the states associated with the shape of the set of points on the unit sphere given by the Majorana representation, ignoring the particular orientation on the space of this configuration of points. At the geometric level, the space of states has the structure of a fiber bundle, where the base space corresponds to the space of shapes (which formally captures the possible shapes of constellations, independently of their orientation) and the fibers are homeomorphic to $\mathrm{SU}(2)$, carrying the information about the possible spatial orientations of a fixed shape.

Pure-symmetric two-qubit states are represented, under Majorana's prescription, as two points, possibly degenerated, on the unit sphere. States, up to a rotation, can be parametrized in the following way: consider the state $|\psi_{\alpha}\rangle$ whose Majorana representation is given by two stars in the $xz$-plane, symmetrically disposed with respect to $z$-axis, separated by an angle $\alpha\in [0,\pi]$; this means that the two Majorana stars are separated by an angle $\alpha$ and located at 
\begin{equation}
\label{points_example}
\hat{n}_1=(\sin(\alpha/2),0,\cos(\alpha/2)),\ \hat{n}_2=(-\sin(\alpha/2),0,\cos(\alpha/2))\ .
\end{equation}

By applying Theorem \ref{Theorem2} to the state with stars given by \eqref{points_example}, we find that the Bloch vector(s) and the correlation matrix of this state are
\begin{align}
    r=\left(0,0,\frac{4\cos(\alpha/2)}{3+\cos\alpha}\right), \ 
    T=\left(
\begin{array}{ccc}
 1-\frac{4}{\cos \alpha +3} & 0 & 0 \\
 0 & \frac{4}{\cos \alpha +3}-1 & 0 \\
 0 & 0 & 1 \\
\end{array}
\right),
\end{align}

then,  by evaluating (\ref{2qubitDiscordHS}) in $\rho_{\alpha}:=|\psi_{\alpha}\rangle \langle \psi_{\alpha} |$, we obtain that
\begin{equation}
\label{dg2}
    D_G(\rho_{\alpha})=\frac{2\sin^4\frac{\alpha}{2}}{(3+\cos\alpha)^2}
    \ .
\end{equation}

By the relation  $D_{G}(\rho)=C(\rho)^2/2$ valid \cite{Rog.Spe.Ill:16} for any pure bipartite state $\rho$, we can read the concurrence of the state in terms of the shape parameter $\alpha$:
\begin{equation}
    C(\rho_{\alpha})=\frac{2\sin^2 \frac{\alpha}{2}}{|3+\cos\alpha|}\ .
\end{equation}

The result in equation \eqref{dg2} is interesting since, when regarding the state as a spin-$1$ state as in \eqref{psialpha}, the geometric quantum discord exactly coincides with its value of the $1$-anticoherence measure based on total variance \cite{baguette2017anticoherence}, 
defined as
\begin{equation}
\label{1antic}
 A_1^{\mathbb{V}}(|\psi\rangle)=\frac{s^2-|\langle\vec{S}\rangle|^2}{s^2}\ .
\end{equation}

The expression on the RHS of \eqref{dg2} was found in \cite{baguette2017anticoherence} as an example of this measure of anticoherence. To see this note that, by \eqref{comp_Psi_r}, the state whose constellation is given by \eqref{points_example} is
\begin{equation}
\label{psialpha}
    |\tilde{\psi}_{\alpha}\rangle=\frac{2}{\sqrt{3+\cos \alpha}}
    \left(
    \cos^2 (\alpha/4) |1,1\rangle-\sin^2 (\alpha/4)|1,-1\rangle
    \right).
\end{equation}

Then, considering the spin-$1$ generators
\begin{equation}
    S_x=\frac{1}{\sqrt{2}}\begin{pmatrix}
        0&1&0\\
        1&0&1\\
        0&1&0\\
    \end{pmatrix}, \  \  \ 
    S_y=\frac{1}{\sqrt{2}i}\begin{pmatrix}
        0&1&0\\
        -1&0&1\\
        0&-1&0\\
    \end{pmatrix}, \  \  \
    S_z=\begin{pmatrix}
        1&0&0\\
        0&0&0\\
        0&0&-1\\
    \end{pmatrix}, 
\end{equation}
and computing the spin expectation value, we find that $\langle S_x \rangle_{|\tilde{\psi}_{\alpha}\rangle}=\langle S_y \rangle_{|\tilde{\psi}_{\alpha}\rangle}=0$ and
\begin{equation}
    \langle S_z \rangle_{|\tilde{\psi}_{\alpha}\rangle}=\frac{4 \cos \frac{\alpha }{2}}{3+\cos \alpha },
\end{equation}
thus, this implies, by definition \eqref{1antic} and the above result \eqref{dg2}, that
\begin{equation}
    A_1^{\mathbb{V}}(|\tilde{\psi}_{\alpha}\rangle)=\frac{4 \sin ^4\frac{\alpha }{2}}{(3+\cos \alpha)^2}=2D_{G}(|\psi_{\alpha}\rangle).
\end{equation}

Therefore, taking into account that this parametrization covers all possible permutation-symmetric states up to orientation, and both, geometric discord and 1-anticoherence are constant along the $\mathrm{SU}(2)$-fibers for this class of states, we have the following result.

\begin{theorem}
\label{Theorem4}
    The geometric quantum discord with respect the Hilbert-Schmidt distance of a pure-symmetric state of two-qubits coincides with half of the value of the 1-anticoherence of its corresponding spin vector under the correspondence $|00\rangle \rightarrow |1,1\rangle$, $(|01\rangle+|10\rangle)/\sqrt{2} \rightarrow |1,0\rangle$ and $|11\rangle \rightarrow |1,-1\rangle$.
\end{theorem}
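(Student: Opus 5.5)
The plan is to reduce the statement to the one-parameter family $|\psi_\alpha\rangle$ with stars \eqref{points_example}, for which the identity $A_1^{\mathbb{V}}=2D_G$ was computed above. The reduction uses invariance of both sides under the simultaneous local rotation $U\otimes U$, $U\in\mathrm{SU}(2)$, which on the symmetric subspace acts as the spin-$1$ rotation $D(R)$.

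First, I would show that every pure symmetric two-qubit state is, up to a global phase, of the form $(U\otimes U)|\psi_\alpha\rangle$ for some $U$ and some $\alpha\in[0,\pi]$. By the Majorana correspondence, the state is fixed up to phase by its two stars $\hat n_1,\hat n_2$. Let $\alpha$ be the angle between them. A rigid rotation $R$ carries the pair onto the configuration \eqref{points_example}: rotate the bisector of $\hat n_1,\hat n_2$ to the $z$-axis and the plane they span to the $xz$-plane. The antipodal case $\alpha=\pi$ has no unique bisector, and any rotation aligning the pair with $\pm\hat e_1$ works. Since the constellation of $D(R)|\psi\rangle$ is $R(C)$, the claim follows. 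Under the correspondence $|00\rangle\to|1,1\rangle$, $(|01\rangle+|10\rangle)/\sqrt2\to|1,0\rangle$, $|11\rangle\to|1,-1\rangle$, the operator $U\otimes U$ restricted to the symmetric subspace is exactly $D(R)$. Moreover, $\tfrac12(\sigma\otimes\mathbb I+\mathbb I\otimes\sigma)$ maps to $\vec S$.

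Second, I would verify the invariance of each side. For $D_G$, formula \eqref{2qubitDiscordHS} gives it via $G=rr^T+TT^T$. Under $U\otimes U$ these transform as $r\mapsto Rr$ and $T\mapsto RTR^T$, so $G\mapsto RGR^T$ and the trace and largest eigenvalue are unchanged. Equivalently, $\Omega$ and the Hilbert–Schmidt norm are invariant under local unitaries. For $A_1^{\mathbb{V}}$, we have $\langle\vec S\rangle\mapsto R\langle\vec S\rangle$, so $|\langle\vec S\rangle|$ is invariant. Both quantities are also insensitive to global phase.

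Third, I would combine these with the explicit computation preceding the theorem: $D_G(\rho_\alpha)=2\sin^4(\alpha/2)/(3+\cos\alpha)^2$ and $A_1^{\mathbb V}(|\tilde\psi_\alpha\rangle)=4\sin^4(\alpha/2)/(3+\cos\alpha)^2$. For a general state $\rho=(U\otimes U)\rho_\alpha(U\otimes U)^\dagger$ this gives $D_G(\rho)=D_G(\rho_\alpha)=\tfrac12A_1^{\mathbb V}(|\tilde\psi_\alpha\rangle)=\tfrac12A_1^{\mathbb V}(D(R)|\tilde\psi_\alpha\rangle)$, which is the claim.

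As a cross-check independent of the normal form, I could use Theorem \ref{Theorem2}. There $|\langle\vec S\rangle|=|r|$ and $|r|^2=4|\hat n_1+\hat n_2|^2/(3+c)^2=8(1+c)/(3+c)^2$ with $c=\hat n_1\cdot\hat n_2$, so $A_1^{\mathbb V}=1-|r|^2=(1-c)^2/(3+c)^2$. This must match $2D_G=C^2$ for pure states, i.e.\ $1-\|r\|^2$ for the reduced state, consistent with \eqref{dg2}.

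The main obstacle is the bookkeeping in the first step. One must confirm that the paper's qutrit correspondence and sign conventions in \eqref{majo} are such that $U\otimes U$ really corresponds to $D(R)$ with the same $R$ acting on the stars, rather than a reflected or conjugated rotation. Since only rotation-invariant scalars appear on both sides, I would sidestep this by noting that any such convention mismatch is still an orthogonal map and leaves both quantities unchanged. The identification of $\langle \vec S\rangle$ with the reduced Bloch vector $r$ then holds directly in the spin-$1$ matrices given.
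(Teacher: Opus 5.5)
Your proposal is correct and follows the same argument as the paper: reduce to the canonical family $|\psi_\alpha\rangle$ via the correspondence $U\otimes U\leftrightarrow D(R)$, note that $D_G$ (through $G\mapsto RGR^T$) and $|\langle\vec S\rangle|$ are rotation invariant, and use the explicit values $D_G(\rho_\alpha)=2\sin^4(\alpha/2)/(3+\cos\alpha)^2=\tfrac12 A_1^{\mathbb V}(|\tilde\psi_\alpha\rangle)$. You justify the invariance and the $\alpha=\pi$ case more carefully than the paper's one-line appeal to constancy along $\mathrm{SU}(2)$-fibers, and your cross-check is already a complete proof that needs no normal form: since $\langle\vec S\rangle=r$, every pure symmetric state satisfies $A_1^{\mathbb V}=1-\|r\|^2=C^2=2D_G$, using only $C^2=1-\|r\|^2$ and $D_G=C^2/2$ for pure two-qubit states.
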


The above result in Theorem \ref{Theorem4} confirms quantitatively the relation between spin $1$-anticoherence of a spin state and quantum correlations (in the form of concurrence or geometric quantum discord) of the corresponding symmetrized state in the case of two qubits. In \cite{baguette2017anticoherence}, a result of the same nature is found in case of a measure of $1$-anticoherence based on purity which is related to the linear entropy of the reduced state of the density matrix of the corresponding symmetrized state.

\section{Conclusions}
\label{Section6}
In this work, we compared two of the most commonly used extensions of the Bloch sphere representation of qubits to the case of states of arbitrary dimension $d>2$, finding explicit analytical relations between them. In particular, in Section \ref{Section3}, we found an analytical expression of the density matrix in terms of the Majorana constellation of the state (Proposition \ref{proposition_densityM}) and also of the Bloch vector in $\mathbb{R}^{d^2-1}$ in terms of the $d-1$ Majorana stars of an arbitrary qudit state (Theorem \ref{Theorem1}).

Both representations can be used in the case of multipartite  pure symmetric states of $n$ qubits. For $n=2,3$ cases we found, in Section \ref{Section4}, new explicit analytical relations that allows us to express: (i) the Bloch vectors and correlations matrix of a pure symmetric two-qubit state in terms of the two stars constituting its Majorana constellation (Theorem \ref{Theorem2}) and (ii) the Bloch vectors, correlations matrices and correlations tensor of a pure symmetric three-qubit state in terms of the three stars of its Majorana constellation (Theorem \ref{Theorem3}).

Theorems \ref{Theorem2} and \ref{Theorem3} imply that we can make use of the Majorana representation to analyze quantum correlations, usually described in terms of the aforementioned tensors of different ranks. In this vein, in Section \ref{Section5}, we compared the geometric quantum discord for pure symmetric states of two qubits, for all possible configurations of the Majorana stars, revealing a connection with the measure of $1$-anticoherence of spin states, as we summarize in Theorem \ref{Theorem4}. Beyond this application, these results  could also be useful when solving optimization problems of rotationally invariant functions on the set of pure symmetric quantum states in the two- and three-qubit cases, by reducing the number of required parameters to $1$ and $3$ parameters, respectively, since the optimization can be carried out in the shape space.

An important open question is how  to express, for pure symmetric qubit states of arbitrary number of parties,  the general tensors appearing in the Bloch representation in terms of the Majorana constellation. Furthermore, it would be interesting to explore whether those relations can be extended even beyond symmetric states or even pure states, taking, for example, the generalization of the Majorana representation for arbitrary states found in \cite{Ser.Bra:20}.

\bibliographystyle{ieeetr}
\bibliography{referencesQ.bib}
\end{document}